\newtheorem{theorem}{Theorem}
\newtheorem{lemma}[theorem]{Lemma}
\newcommand{\E}{\mathbb{E}\,}
\newcommand{\bb}{\bm{\beta}}
\newcommand{\bg}{\bm{\gamma}}
\newcommand{\bth}{\bm{\theta}}
\newcommand{\bx}{\bm{x}}
\newcommand{\ox}{\overline{x}}
\newcommand{\rom}[1]{\uppercase\expandafter{\romannumeral #1\relax}}
\newcommand{\dif}{\mathrm{d}}
\newcommand{\pkg}[1]{{\normalfont\fontseries{b}\selectfont #1}}
\let\proglang=\textsf
\DeclareMathOperator*{\argmin}{arg\,min}
\newenvironment{keywords}
{\medskip\noindent\textbf{Keywords}\,\begin{itshape}}
{\end{itshape}}
\newcommand{\blind}{0}
\begin{document}

\title{Survival Modeling of Suicide Risk with Uncertain Diagnoses and a Cure Fraction}
\author{Wenjie Wang$^1$,
  Chongliang Luo$^2$,
  Robert H. Aseltine$^{3,4}$,
  Fei Wang$^5$,\\
  Jun Yan$^1$,
  Kun Chen$^1$\thanks{Corresponding author; kun.chen@uconn.edu}\\
  $^1$\textit{Department of Statistics, University of Connecticut}\\
  $^2$\textit{Department of Surgery, Washington University in St. Louis}\\
  $^3$\textit{Center for Population Health, UConn Health}\\
  $^4$\textit{Division of Behavioral Sciences and Community Health, UConn Health}\\
  $^5$\textit{Department of Population Health Sciences, Weill Cornell Medicine}
}

\maketitle

\begin{abstract}
  Motivated by the pressing need for suicide prevention through improving
  behavioral healthcare, we use medical claims data to study the risk of
  subsequent suicide attempts for patients who were hospitalized due to suicide
  attempts and later discharged.
  Understanding the risk behaviors of such patients at elevated suicide risk is
  an important step towards the goal of ``Zero Suicide''.
  An immediate and unconventional challenge is that the identification of
  suicide attempts from medical claims contains substantial uncertainty: almost
  20\% of ``suspected'' suicide attempts are identified from diagnosis codes
  indicating external causes of injury and poisoning with undermined intent. It is
  thus of great interest to learn which of these undetermined events are more
  likely actual suicide attempts and how to properly utilize them in survival
  analysis with severe censoring. To tackle these interrelated problems, we
  develop an integrative Cox cure model with regularization to perform survival
  regression with uncertain events and a latent cure fraction. We apply the
  proposed approach to study the risk of subsequent suicide attempts after
  suicide-related hospitalization for the adolescent and young adult population,
  using
  medical claims data from Connecticut. The identified risk factors are highly
  interpretable; more intriguingly, our method distinguishes the risk factors that
  are most helpful in assessing either susceptibility or timing of subsequent
  attempts. The predicted statuses of the uncertain attempts are further
  investigated, leading to several new insights on suicide event identification.
\end{abstract}

\begin{keywords}
Integrative learning, Medical claims data, Mental health, Rare
event, Uncertainty quantification, Missing censoring indicator
\end{keywords}

\section{Introduction}%
\label{sec:intr}

Suicide is a serious public health problem in the US\@. According to the
National Institute of Mental Health \citep{hedegaard2018nchsdb},
the annual suicide rate for the US population, as measured by the number of
deaths by suicide per every 100,000 population, increased 33\%
(from 10.5 to 14.0 per 100,000) from 1999 through 2017.
Losing a loved one to suicide is devastating. The effects of suicide or suicide
attempt on the survivors can be long-lasting and far-reaching.
Further, suicide is associated with high economic
costs for individuals, families, communities, and the society as a
whole~\citep{shepard2016sltb}.
Various studies show that a prior suicide attempt is a strong risk factor for
suicidal death~\citep{bostwick2015ajp}, and there
is a strong likelihood of a subsequent suicide attempt after the initial
one~\citep{suominen2004ajp,parra2017bmc}. Unfortunately, 
suicidal behavior is not always preceded by clear warnings. On the other hand,
opportunities for prevention do exist. In particular, a suicide attempter may have
been in contact with the healthcare system prior to his/her
attempt, which creates a window of opportunity for professional intervention.

Motivated by the pressing need for suicide prevention through improving
behavioral healthcare and inspired by the
{``Zero Suicide''} initiative \citep{Brodsky2018} for
transforming healthcare systems, we aim to build a data-driven approach with large-scale
medical claims data to understand and identify risk factors associated with
a subsequent suicide attempt among patients who were previously hospitalized
for suicide attempt. Being able to
identify the patients at elevated risk for subsequent attempts is an important
first step towards a better allocation of prevention efforts with limited
resources. More specifically, we examine the suicide attempt risk of youth
and young adult patients
in the State of Connecticut who had been hospitalized due
to probable or suspected suicide attempts,
using the 2012--2017 medical claims from the state's All Payers Claims
Database.

Statistically, it appears straightforward to formulate
the problem as a survival analysis, to model the time to
the subsequent suicide attempt from the
initial suicide-related hospitalization \citep{DoshiChen2020}. That is, for each patient, the event time is
observed if there was a record of a subsequent suicide attempt, and otherwise it is
considered as right censored and the censoring time is determined by the end of
the follow-up, e.g., the end of the last encounter with the healthcare system or the
end of the study period. However, 
the problem is not that straightforward,
and our attempt at adopting a conventional model
such as Cox's regression~\citep{cox1972jrssb}
becomes problematic.

The most immediate and unconventional challenge is that the
identification of suicide attempts from medical claims data carries
substantial uncertainty. The prevailing rules
for identifying suicide attempts are based on ICD-9/ICD-10
(International Classification of Diseases, 9th/10th Revision) diagnosis
codes. These include codes that directly record suicidal attempt (e.g.,
ICD-9 E950--E958/ICD-10 X71--X83: suicide and self-inflicted injury)
and some combinations of codes that are indicative of suicidal
behaviors~\citep{patrick2010pds,chen2017jah,DoshiChen2020}. While a majority of the attempts
identified by these rules can be safely considered as factual, there are also
``suspected'' attempts that hold some unignorable uncertainty.
In particular, about 19\% of the attempts in the data we examined
were identified through the ICD-9
E980--E988 codes (or their ICD-10 equivalences), meaning external
causes of injury and poisoning with undermined intent, be it accidentally or
purposely inflicted. Most existing research either
included these events without taking account of their uncertainty or simply
removed them altogether \citep{Barak2017,walsh2018jcpp,Su2020TransPsy}.
Both approaches may lead to substantial bias and/or information loss,
especially when modeling a rare event like suicide attempts.
In fact, the uncertainty in the identification of medical conditions from
diagnosis codes in claims data is quite common \citep{strom2001},
and it can be caused by various reasons including intrinsic diagnosis
uncertainty, coding errors, difference in physician practice, inaccuracy in
patient reporting, among others \citep{Bhise2018,Bell2020}.
Therefore, to better understand suicide risk and improve its predictive
modeling, there is a strong rationale for determining which events identified by
the E98 codes are actual suicide attempts.

%

Besides the uncertainty in event identification, there are several other
challenges in suicide risk modeling, including the rarity of attempts and
the large-dimensionality of candidate predictors. The rarity of suicide attempts,
even among patients with previous attempts, directly translates to a very high
censoring rate in the observed data, for which conventional survival regression
methods may lack power in making inference on the effects of risk factors and
may predict poorly.
This difficulty may partly explain why most existing research on suicide risk
modeling has moved away from survival analysis,
opting instead for a less ambiguous goal of modeling
the occurrence of attempt or death with classification
methods~\citep{Belsher2019,Kessler2020,Su2020TransPsy}.
We argue, however, that it could be beneficial to
combine classification and survival
analysis. In particular, the so-called cure
model~\citep{berkson1952jasa,Peng2000,Amico2018}
can be attractive. The approach incorporates a ``cured'' sub-population,
which is not subject to or has negligible risk
of the outcome event of interest.
In our study, it is indeed plausible that some patients are
not exposed to the risk of a subsequent suicide attempt or can be considered
``long-term survivors'' of suicidal behavior. Since our study
cohort consists of patients who were hospitalized due to either a probable or
suspected suicide attempt or
self-injury of undetermined intent,
it is possible that some of these patients never actually attempted suicide.
Another rationale for considering the cure fraction is that it can help to
evaluate whether hospitalization and/or intervention following the initial
attempt are effective in promoting long-term survival.
Therefore, it is of great interest to be able to differentiate and understand
the cured sub-population, under this
unique setting of uncertain suicide attempts.
Furthermore, there are many candidate risk factors extracted from medical claims
data such as the ICD-9/10 diagnosis codes,
making variable selection a necessity.



Figure~\ref{fig:diagram} summarizes the survival analysis setup where
uncertainty, censoring, and a latent cure fraction are simultaneously present.
A decomposition of the subjects to different categories helps to
disentangle the puzzle.
In the suicide risk study, Case~1 consists of patients for whom
the events of suicide attempt are observed with certainty,
e.g., determined by the presence of E95 codes in ICD-9.
As such, there is no uncertainty or cure
fraction in Case~1.  Case~2 consists of patients whose event times are
censored. Among those patients, some are
censored under risk (Case~2a),
while some are considered cured without exposure to the risk (Case~2b).
Lastly, Case~3 consists of patients having
uncertainty or events with undetermined cause.
In our study, Case~3 refers to patients who were
diagnosed with the E98x codes. Such events can truly be suicide
attempts (Case~3a), a misdiagnosis with risk of subsequent
attempt (Case~3b), or a misdiagnosis without subsequent risk (Case~3c).
It should be clear that the decomposition of Case~2 and of Case~3 is not
observed and have to be inferred from proper modeling of the observed data.

\begin{figure}
  \centering
  \includegraphics[width=\linewidth]{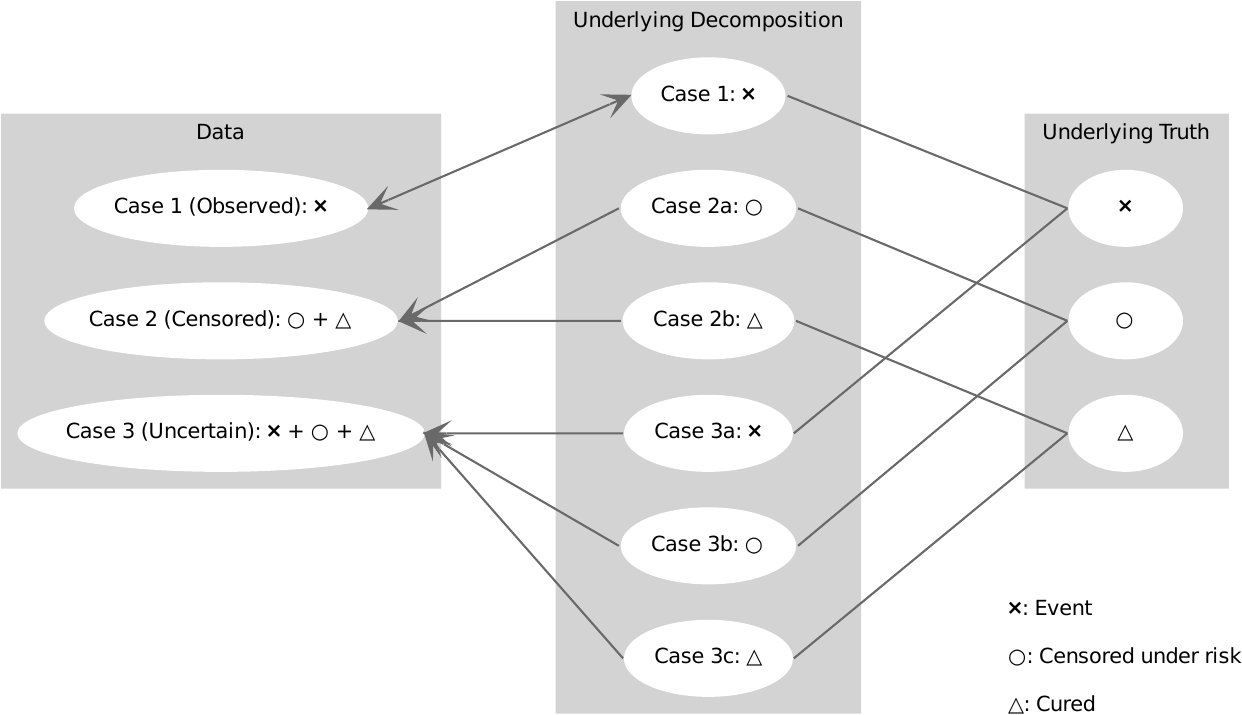}
  \caption[Decomposition of subjects.]
  {Decomposition of subjects in three different cases.}\label{fig:diagram}
\end{figure}

We propose an integrative Cox cure model with regularization to perform survival
regression with uncertain events and a latent cure
fraction. The model setup thus extends both the Cox model and the cure
model. More specifically, our approach can be regarded as a mixture model
with the regular Cox
model as its skeleton, in which the event uncertainty and the cure status are
modeled as latent variables with missing values. 
Regularized estimation
techniques 
are utilized to enable shrinkage
estimation and variable
selection. We developed a computational algorithm based on an integration of the
Majorization-Minimization algorithm, the profile likelihood, and the
coordinate descent method. 
The algorithm was shown to
be stable, efficient, and have monotonic descending property. Our new method
outperformed competing methods in simulation studies under realistic
settings similar to the motivating application of the suicide risk study.

The proposed approach was applied
to study the risk of subsequent suicide attempt after suicide-related
hospitalization for adolescent and young adult population, using medical claims
data of year 2012 to 2017 from Connecticut All-Payer Claims Database (APCD).
The identified risk factors are highly interpretable and consistent with current
understanding; intriguingly, our method is able to
distinguish the risk factors that are mostly helpful in assessing whether the patients
are under risk of subsequent attempt and the ones that can actually predict the
time of subsequent attempt. The predicted cure and event status among those
uncertain suicide attempts identified by the E98x codes are further investigated,
providing several new insights on suicide risk identification and prevention.

The rest of the paper is organized as follows. In
Section~\ref{sec:conn-suic-attempt}, we describe the claims data and the problem
setup for studying subsequent suicide attempt. In Section~\ref{sec:mode}, based
on a methodological review, we
propose the integrative Cox cure model with
uncertain events and derive its likelihood. The estimation procedure is developed in Section~\ref{sec:estimation}.  Simulation studies are presented in
Section~\ref{sec:simu}.  The suicide risk study is reported in
Section~\ref{sec:conn-suic}. Section~\ref{sec:discussion} concludes with a discussion.

\section{Data and Exploratory Analysis}\label{sec:conn-suic-attempt}

We focused on young patients of age 10--24 who were admitted to hospitals in
Connecticut from 2012--2017 due to suicide attempts, whether
determined or suspected.
Data on primary and secondary diagnoses were available from the
Connecticut APCD\@.  We excluded a small proportion of
patients with expired/dead status at discharge of their first recorded
suicide-related hospitalization, and focused on those patients
who survived and were discharged.
The event of interest was a subsequent suicide
attempt after the first hospitalization due to possible
suicidal behaviors (diagnosed by E95x or E98x codes).
The available APCD data was from October 1, 2012 to
September 30, 2017. As such, we considered a retrospective follow-up study
setup, in which the patients were followed up until a determined or suspected suicide
attempt occurred or until September 20, 2017 if no suicide attempt was observed.



A total of 7,552 patients with prior hospitalizations for suicide attempts were
included in this analysis.  Among them, 3,831 patients were female and 3,721
patients were male.  A total of 736 patients were coded as having subsequently
attempted suicide using the E95x codes, while 173 patients experienced injuries
whose cause was undetermined as to whether a suicide attempt or accident as
indicated by the E98x diagnosis codes, and the remaining 6,643 subjects did
not have a recorded suicide attempt during the follow-up.
In other words, the size of Case~1--3 is, respectively, 736, 6,643, and 173.
The censoring rate among subjects in Case~1--2 is 90.0\%, and almost
20\% of suicide attempts were identified with uncertainty.

The APCD data contained a large amount of information on the characteristics of
patients and their previous hospital admissions.  The diagnoses were mainly
recorded as ICD-9 diagnosis codes prior to fiscal year 2015, and
ICD-10 codes (the 10th Revision) were used afterwards. As the existing
rules~\citep{patrick2010pds,chen2017jah} for identifying determined/suspected
suicide
attempts were mainly based on ICD-9 codes, we translated all the ICD-10
diagnosis codes to their ICD-9 equivalence by the General Equivalence Mappings
developed from Centers by the Medicare and Medicaid Services.  Both forward
and backward mapping were used in the crosswalk as suggested by the Agency for
Healthcare Research and Quality. The translation was efficiently done by
\proglang{R} package \pkg{touch}~\citep{wang2018touch}.

After harmonizing to ICD-9 codes, we grouped the codes in the data by their
three leading characters, which resulted in 911 major diagnosis categories.
For each patient, we counted the number of appearances of the diagnosis
codes belonging to each category in his/her historical records up to the initial
admission.
We further filtered out rare diagnosis codes to
avoid separation/semi-separation problem
by restricting minimum cell counts of
the 2 by 2 contingency table of the diagnosis indicator and the event indicator
of subjects in Case~1--2 to be at least 10. While this simple filtering approach is a common practice, we remark that it is possible to apply some newly developed feature aggregation methods \citep{yan2021rare,ChenAseltineChen2022} to better utilize very rare diagnosis codes.
The remaining 246 ICD-9 categories, in
addition to demographic covariate gender and age, were considered in the
survival analysis predicting a subsequent attempt.

For subjects in Cases~1--2, the overall survival probability estimated by the
Kaplan--Meier estimator~\citep{kaplan1958jasa} with point-wise confidence
intervals (based on normality of the logarithm of survival rate estimates) is
plotted in Figure~\ref{fig:data:surv}.  A flat tail was observed at the end of
the survival curve, which suggested a long enough follow-up time and a possible
presence of a cure fraction.  We applied the method proposed
by~\citet{maller1994jasa} to formally test whether the follow-up is sufficient.
The test statistic $a_n$ was proposed to be ${(1 - N_n / n)}^n$, where $n$ is
the sample size, $N_n$ is
the number of uncensored observations in time period $(2T_n^* - T_n, T_n^*]$,
$T_n^*$ is the largest uncensored event time, and $T_n$ is the largest survival
observed time.  In our case, the test statistics $a_n$ computed based on the data
excluding Case~3 is 0.25\%, which suggested a strong evidence of a sufficient
follow-up and motivated us to consider a potential cure fraction in the
modeling.  We further performed the likelihood ratio test proposed
by~\citet{maller1995bimetrics} for testing the presence of a cure fraction among
patients in Case~1--2.  More specifically, we fitted a regular Cox model and Cox
cure model with age and gender as control variables to patients in Case~1--2,
where the logistic part of the Cox cure model only included an intercept term.
The test statistic was 3678.0, much larger than the critical value 2.71, which
provided a strong evidence of the presence of a cure fraction among patients in
Cases~1--2.

\begin{figure}
  \centering
  \includegraphics[width=\linewidth]{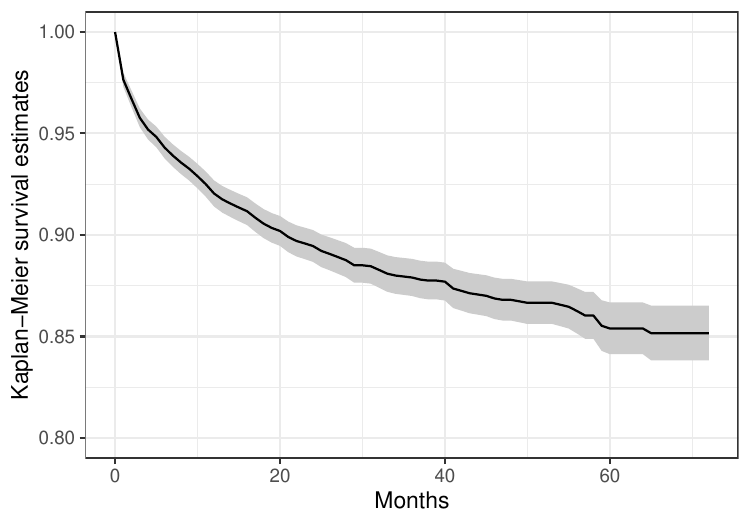}
  \caption[The estimated survival curve.]
  {The overall survival curve by the Kaplan-Meier estimator for subjects in
    Cases~1--2.}%
  \label{fig:data:surv}
\end{figure}

The unique features of the problem
demonstrated in our exploratory analysis motivate us to develop a general
statistical approach on handling uncertain and rare medical diagnosis in
time-to-event modeling.

\section{Integrative Cox Cure Model with Uncertain Events}%
\label{sec:mode}

\subsection{Overview of Existing Methods}

In the literature, a few methods have been proposed to similar problems of
mis-measured survival outcomes or uncertain events. When a binary diagnosis
outcome was measured with uncertainty, \citet{richardson2000biostatistics}
proposed an estimation procedure for the product limit estimate of survival
function with no covariate based on the expectation maximization (EM) algorithm
\citep{dempster1977jrssb}.
The method is only applicable for discrete-time contexts where the time points
of outcome testing are predetermined.
\citet{meier2003biometrics} extended the discrete proportional hazards model
\citep{kalbfleisch2002wiley} to mis-measured outcomes under a setting similar to
\citet{richardson2000biostatistics} but allowed covariate effects.
\citet{wang2020aoas} proposed a model for survival data with uncertain event
times arising from partial linkage of different datasets, i.e., for a subject
there could be multiple conflicting potential event times and censoring time,
arising from linking different datasets without unique identifiers and among
which only one of them is the truth.
None of the above methods is appropriate for the uncertainty scenario in our
problem, and none of them considered cure fraction and feature selection.

On the other hand, the cure rate model was first proposed by
\citet{berkson1952jasa}. The logit link has been widely used in modeling the
susceptible probability as proposed by \citet{farewell1982biometrics}.
\citet{kuk1992biometrika} proposed modeling the conditional survival times
through a Cox proportional hazards model~\citep{cox1972jrssb}. Its model
estimation based on the EM algorithm was later proposed by
\citet{sy2000biometrics}.
So far limited efforts have been made for survival data with masked events and
the presence of a cure fraction.
\citet{dahlberg2007biometrics} and \citet{zhang2009jmmb} proposed cure models
with masked events based on the Cox model and the accelerated failure time (AFT)
model \citep{wei1992sim}, respectively, in the framework of competing risks.
Unfortunately, neither of the above methods considered uncertain events, where
no unmasking event is observed.

Regularization techniques are often adopted in practice to conduct feature
selection with large data.
Here our review shall mainly focus on the
computational and methodological aspects of regularization that are related to
survival analysis. Coordinate descent (CD) methods for solving the lasso problem
\citep{tibshirani1996jrssb} was proposed by \citet{fu1998jcgs} and known as the
``shooting'' procedure.
The CD algorithms for generalized linear models and
Cox models were developed by \citet{friedman2010jss} and
\citet{simon2011jss}, respectively.
However, as mentioned in
\citet{friedman2010jss}, the convergence of Newton's method used for handling
non-quadratic loss is generally expected but not
guaranteed.  
For regularized Cox model,
\citet{simon2011jss} used an approximated Hessian matrix of the (partial)
log-likelihood, which also lacks convergence guarantee.
To resolve these convergence issues, \citet{yang2013sii} derived the
coordinate-majorization-descent (CMD) algorithm for the regularized Cox's model,
which combines the CD method with the principle of majorization-minimization
(MM) algorithm \citep{hunter2004as,lange2000jcgs} and always converges.
Recently, a few works have been proposed to
perform variable selection for cure models.  \citet{scolas2016sim} proposed
variable selection with adaptive lasso for interval-censored
data in a parametric cure model. 
\citet{masud2018smimr} proposed variable selection methods for mixture cure
model and promotion cure model. 
Some recent works on cure models have promoted structural
similarity \citep{fan2017smmr} and sign consistency \citep{shi2019smmr} of
covariate coefficient estimates on both model components of the cure model.


\subsection{Proposed Method}

Consider a random sample of $n$ subjects who fall into the three cases as
illustrated in Figure~\ref{fig:diagram}.
Let $I_1$, $I_2$, and $I_3$ be the indices of the subjects in Case~1, 2, and 3,
respectively.
Define $T_j=\min(V_j, C_j)$, $\Delta_j=\bm{1}(V_j \le C_j)$ with realizations
$t_j$ and $\delta_j$, where $V_j$ and $C_j$ is random variable of the event time
and the censoring time of subject $j$, $j \in \{1,\ldots,n\}$, respectively.
The survival time $T_j$ is observed for all cases.
However, the event indicator $\Delta_j$ is observed only for $j \in I_1 \cup
I_2$ but missing for $j \in I_3$.
Let $\xi_j=\bm{1}(j \in I_1\cup I_2)$ be an indicator variable taking value 1 if
$\Delta_j$ is observed and taking value 0 otherwise.
Define $Z_j = 1$ if subject $j$ is susceptible (not cured), and $Z_j = 0$
otherwise with probability $p_j = \Pr(Z_j = 1)$.
We assume that censoring time $C_j$ is independent of $Z_j$.
  Note that $Z_j = 1$ is observed for $j \in I_1$ but missing for $j \in I_2 \cup
  I_3$, which suggests that we observe $\xi_j=1$, $\Delta_j = 1$, and $Z_j=1$ for $j\in I_1$.
  Given that $Z_j = 0$, that $\Delta_j = 0$ holds with probability one.
  For $j\in I_2$, we observe $\xi_j=1$ and $\Delta_j = 0$.
  Suppose that the available data of subject $j$ are $\{(\bm{x}_j^{\top}, t_j,
  \sigma_j, \xi_j)\}$,
  where $\bm{x}_j$ is a $p$-dimensional covariate vector, the actual
  event indicator $\delta_j$ is masked by $\sigma_j= z_j \delta_j \xi_j$,
  and $z_j$ is the realization of $Z_j$.
  Given that $\xi_j=1$, we have $j \in I_1\cup I_2$ by definition
    and we observe $\sigma_j=1$ for $j\in I_1$, and $\sigma_j=0$ for $j\in I_2$.
We assume $\delta_j$ is missing at random (MAR) satisfying
\begin{align*}
  q_j := \Pr(\xi_j = 1\mid T_j = t_j, \Delta_j = \delta_j, Z_j = z_j, \bx_j) =
  \Pr(\xi_j = 1\mid T_j = t_j, \bx_j),
\end{align*}
which means the event missing indicator $\xi_j$ depends on
the fully observed $t_j$ and $\bx_j$ only.


We regard the problem as a label-missing problem and give the likelihood
function under the complete data as follows:
\begin{align*}
  & \prod_{j \in I_1} q_j p_j f(t_j\mid Z_j=1, \bx_j)\overline{G}(t_j)
  \prod_{j \in I_2} q_j g(t_j) {\left[p_j S(t_j\mid Z_j=1, \bx_j)\right]}^{z_j}
  {\left[(1 - p_j) \right]}^{1 - z_j}\\
  & \prod_{j \in I_3} (1 - q_j)
    {\left[p_j f(t_j\mid Z_j=1,\bx_j) \overline{G}(t_j)\right]}^{\delta_j z_j}
    {\left[p_j g(t_j) S(t_j\mid Z_j=1,\bx_j)\right]}^{(1 - \delta_j) z_j}
    {\left[(1 - p_j) g(t_j)\right]}^{1 - z_j}
\end{align*}
where $f(t\mid Z_j=1, \bx_j)$
and $S(t\mid Z_j=1, \bx_j)$ are the density function and survival
function of $V_j$ given the covariates and that the subject $j$ is susceptible,
respectively;
$g(t)$ and $\overline{G}(t)$ are the density function and survival function of
$C_j$, respectively.
Due to the MAR assumption, we treat the likelihood contributions of the
$\{\xi_j\}$ as nuisance components and simplify the complete-data likelihood
function as follows:
\begin{align}\label{eqn:completel}
  & \prod_{j \in I_1} p_j f(t_j\mid Z_j=1, \bx_j)\overline{G}(t_j)
  \prod_{j \in I_2} g(t_j) {\left[p_j S(t_j\mid Z_j=1, \bx_j)\right]}^{z_j}
  {\left[(1 - p_j) \right]}^{1 - z_j}\notag\\
  & \prod_{j \in I_3}
    {\left[p_j f(t\mid Z_j=1, \bx_j) \overline{G}(t_j)\right]}^{\delta_j z_j}
    {\left[p_j g(t_j) S(t_j\mid Z_j=1,\bx_j)\right]}^{(1 - \delta_j) z_j}
    {\left[(1 - p_j) g(t_j)\right]}^{1 - z_j},
\end{align}
which is useful for deriving an EM algorithm for computation.
See Supplementary Materials for the derivation details.

Given that $Z_j = 1$, we assume the event time $V_j$ of subject~$j$ follows a
Cox proportional hazards model \citep{cox1972jrssb} with the hazard function
\begin{equation}\label{eqn:cox}
  h(t\mid Z_j = 1, \bx_j) = h_0(t\mid Z_j = 1) \exp(\bx_j^{\top} \bb),
\end{equation}
where $h_0(\cdot\mid Z_j = 1)$ is an unspecified baseline function for events,
and $\bb$ is a vector of unknown coefficient of the covariate vector $\bx_j$.
Then the conditional survival function of the event time of subject~$j$ is
$S(t\mid Z_j = 1, \bx_j) = \exp\{-H_0(t\mid Z_j = 1) \exp(\bx_j^{\top} \bb) \}$,
where $H_0(t\mid Z_j = 1) = \int_0^t h_0(s\mid Z_j=1) \dif s$ and the
conditional density function is $f(t\mid Z_j = 1, \bx_j) = h(t\mid Z_j = 1,
\bx_j) S(t\mid Z_j = 1, \bx_j)$.
Given that subject $j$ is cured ($Z_j = 0$), the conditional survival function
satisfies $S(t\mid Z_j=0,\bx_j) = 1$, for $t<+\infty$.

Similarly, let $h_c(\cdot)$ denote the hazard function of censoring times and
$H_c(t) = \int_0^t h_c(s) \dif s$.
Then $g(t) = h_c(t) \overline{G}(t)$ and $\overline{G}(t) = \exp\{- H_c(t)\}$.
Here we assume the censoring time is independent of the event times and
susceptible indicators conditional on the covariates $\bx_j$.
This conditional independence assumption of the censoring time is justified for
our study because the censoring was administrative.

We utilize a logistic regression setup \citep{farewell1982biometrics} to model
$p_j$ with an intercept and covariates $\bx_j$'s. Here by allowing zero
covariate coefficients, the same set
of covariates is considered for modeling both the Cox
regression part~\eqref{eqn:cox} and the cure probability part. The logistic
model can be expressed as
\begin{align}\label{eqn:logis}
  \Pr(Z_j = 1) = p_j = \frac{\exp(\bx_j^{\top}\bg + \gamma_0)}
  {1 + \exp(\bx_j^{\top}\bg + \gamma_0)},
\end{align}
where $\gamma_0$ is the coefficient of the intercept term.

Let
$\bm{\theta} = \left\{\bg, \bb, \gamma_0, h_0(\cdot\mid Z_j=1),
  h_c(\cdot)\right\}$ be the set of unknown parameters.
Then from the complete-data likelihood in \eqref{eqn:completel}, by integrating out the missing components, the likelihood function under the observed data
$\bm{Y}^{O}=\left\{\bm{Y}_1, \bm{Y}^{O}_2, \bm{Y}^{O}_3\right\}$ is
\begin{align}\label{eqn:obs-like}
  L(\bm{\theta} \mid \bm{Y}^{O}) =
  & \prod_{j\in I_1} \left[p_j f(t_j\mid Z_j = 1, \bx_j)\overline{G}(t_j)\right]\nonumber\\
  & \prod_{j\in I_2} \left[p_j g(t_j) S(t_j\mid Z_j = 1, \bx_j) + (1 - p_j) g(t_j)\right]
    \nonumber \\
  & \prod_{j\in I_3} \large[p_j f(t_j\mid Z_j = 1, \bx_j) \overline{G}(t_j) +\nonumber\\
  &~~~~~~ p_j g(t_j) S(t_j\mid Z_j = 1, \bx_j) + (1 - p_j) g(t_j)\large].
\end{align}

With the above likelihood derivations,
we propose the following regularized likelihood estimator
\begin{align}\label{eqn:obs-like-reg}
  \hat{\bm{\theta}} = \argmin_{\bm{\theta}} -\frac{1}{n}
  \ell(\bm{\theta} \mid \bm{Y}^{O}) + P(\bm{\theta}; \lambda),
\end{align}%
where $\ell(\bm{\theta} \mid \bm{Y}^{O})$ is the log-likelihood
function under the observed data given in~\eqref{eqn:obs-like},
and $P(\bm{\theta}; \lambda)$ represents the penalty function for $\bm{\theta}$ with
the tuning parameter $\lambda$.
More specifically, we consider
the elastic net penalties~\citep{zouHastie2005jrssb}
for both $\bb$ and $\bg$, and let
$P(\bm{\theta};\lambda)=
P_{1}(\bb; \alpha_1, \lambda_1) + P_{2}(\bg; \alpha_2, \lambda_2)$,
where
\begin{align*}
  P_{1}(\bb; \alpha_1, \lambda_1)
  & = \lambda_1 \left( \alpha_1 \sum_{k=1}^{p} \omega_k \lvert \beta_k \rvert +
  \frac{1 - \alpha_1}{2} \sum_{k=1}^{p} \beta_k^2\right),\\
  P_{2}(\bg; \alpha_2, \lambda_2)
  & = \lambda_2 \left( \alpha_2 \sum_{k=1}^{p} \nu_k \lvert \gamma_k \rvert +
  \frac{1 - \alpha_2}{2} \sum_{k=1}^{p} \gamma_k^2 \right),
\end{align*}%
with tuning parameters ($\alpha_1$, $\lambda_1$) and ($\alpha_2$, $\lambda_2$)
for $\bb$ and $\bg$, respectively.
In addition, the $\omega_k$ and $\nu_k$, $k = 1, \ldots, p$, represent optional
non-negative weights~\citep{zou2009}.
For example, in the low-dimensional case where the non-regularized estimators
$(\bb^0, \bg^0)$ are reliable, one could set $\omega_k = |\beta_k^0|^{-1}$ and
$\nu_k = |\gamma_k^0|^{-1}$.
In our study, we adopted unit weights $\omega_k = 1$ and $\nu_k=1$ for simplicity.


\section{Model Estimation}%
\label{sec:estimation}

\subsection{Computational Algorithm}\label{sec:em}

We propose an estimation procedure to solve \eqref{eqn:obs-like-reg} by utilizing
the architecture of the EM algorithm, in which the M-step adopts a profile
likelihood similar to the partial likelihood~\citep{cox1975biometrika} and the
CMD algorithm~\citep{yang2013sii}.

We briefly describe the structure of the proposed algorithm.
In the E-step, we compute the conditional expectation of the complete-data log-likelihood given the
observed data and estimates from the last step. 
In the M-step, the conditional expectation of the
complete-data log-likelihood are decomposed into several
parts that involve exclusive sets of parameters. Specifically, after the hazard functions are profiled out using
``Breslow estimator'' \citep{breslow1974biometrics} and the idea of partial
likelihood of \citet{cox1975biometrika}, the problem boils down to two
parts involving $\bb$ and ($\bg$,$\gamma_0$), respectively. We adopt the
monotonic quadratic approximation \citep{bohning1988aism} and derive
CMD algorithms \citep{yang2013sii} for optimizing these two sub-problems.
We show the descending properties of the proposed updating steps and
consequently conclude that our proposed algorithm enjoys the monotonic
descending property in principle of an MM algorithm.

We summarize the resulting EM algorithm in Algorithm~\ref{alg:est}, with the
sub-routine algorithms of $\bb$, and ($\bg$,$\gamma_0$) in
Algorithms~\ref{alg:mstep-beta} and~\ref{alg:mstep-gamma},
respectively.
To simplify the notations, we define $h_j:=h(t_j\mid Z_j=1,\bx_j)$,
$S_j:=S(t\mid Z_j=1,\bx_j)$.
Most necessary quantities are defined within Algorithm~\ref{alg:est}.
We define some additional quantities below.
For simplicity, the dependency on the last step estimates is omitted.
The M-step objective functions for $\bb$ and ($\bg$,$\gamma_0$) are respectively
\begin{align}
  J(\bb)
  & = -\frac{1}{n} l(\bb) +
    P_{1}(\bb; \alpha_1, \lambda_1),\label{eqn:cox:reg}\\
  J(\bg, \gamma_0)
  & = -\frac{1}{n} l(\bg, \gamma_0) +
    P_{2}(\bg; \alpha_2, \lambda_2)\label{eqn:cure:reg},
\end{align}
where
\begin{align*}
  l(\bb)
  & = \sum_{j=1}^n \int_0^{\infty} I_j(\bb, t)
    \dif N_j(t),\\
  I_j(\bb, t)
  & = \bx_j^{\top} \bb -
    \log \left[\sum_{j=1}^n \bm{1}(t \le t_j) M_j \exp(\bx_{j}^{\top}
    \bb)\right],\\
  N_j(t)
  & = \bm{1}(t_j\le t) \left[\bm{1}(j \in I_1) +
    w_{j,1}\bm{1}(j\in I_3)\right],\\
  l(\bg, \gamma_0)
  & = \sum_{j=1}^n (\bx_j^{\top} \bg + \gamma_0) M_j
    - \log \left[1 + \exp(\bx_j^{\top} \bg + \gamma_0)\right],\\
    M_j
  & = \bm{1}(j \in I_1) + v_j\bm{1}(j\in I_2) + (w_{j,1} +
    w_{j,2}) \bm{1}(j \in I_3),\\
  v_j
  & = p_j S_j / [p_j S_j + (1 - p_j)],\\
  m_j
  & = p_j h_j S_j \overline{G}_j +
    p_j S_j g_j + (1 - p_j)g_j,\\
  w_{j,1}
  & = p_j h_j S_j\overline{G}_j/ m_j,~
  w_{j,2} = p_j S_j g_j/m_j.
\end{align*}
Let $\dot{l}_k(\bb)$ and $\dot{l}_{k'}(\bg, \gamma_0)$ denote the first
partial derivative of $l(\bb)$ and $l(\bg, \gamma_0)$ with respect to
$\beta_k$ and $\gamma_{k'}$, respectively, where $k\in\{1,\ldots,p\}$ and
$k'\in\{0,\ldots,p\}$.
Define
\begin{align*}
  D_k = \frac{1}{4n} \sum_{j=1}^n \int_0^{\infty}
  {\left[\max_{j\in \mathcal{R}(t)} (x_{j,k})
  - \min_{j\in \mathcal{R}(t)} (x_{j,k})\right]}^2
  \dif N_j(t),~
   B_k = \frac{1}{4n} \sum_{j=1}^{n} x_{j,k}^2,
\end{align*}%
where $\mathcal{R}(t) = \{j\mid Y_j(t) = 1\}$ represents the index set of
subjects in the risk-set at time $t$.
See Supplementary Materials for details.

\begin{algorithm}[tbp]
  \small
  \caption{Estimation procedure for the proposed model.
  }
  \begin{algorithmic}
    \State{\bf initialize} $\bb$, $\bg$, $\gamma_0$, $h_0(\cdot\mid Z_j = 1)$,
    and $h_c(\cdot)$;
    \Repeat\
    \For{$j=1,2,\ldots,n$}
    \Comment{The E-step}
    \begin{align*}
      p_j & \gets \frac{1}{1 + \exp(- \bx_j^{\top}\bg + \gamma_0)};\\
      S_j & \gets \exp\left[-\sum_{s \le t_j}h_0(s\mid Z_j = 1)
        \exp(\bx^{\top}_j\bb)\right];~
      \overline{G}_j \gets \exp\left[-\sum_{s \le t_j} h_c(s)\right];\\
      m_{j,1} & \gets p_j h_j S_j \overline{G}_j;~
      m_{j,2} \gets p_j h_c S_j \overline{G}_j;~
      m_{j,3} \gets (1 - p_j) h_c \overline{G}_j;~
      v_j \gets \frac{m_{j,2}}{m_{j,2} + m_{j,3}};
    \end{align*}
    \For{$k=1,2,3$}
    \begin{align*}
      w_{j,k} \gets \frac{m_{j,k}}{\sum_{k=1}^3 m_{j,k}};
    \end{align*}
    \EndFor\
    \begin{align*}
      M_j \gets \bm{1}(j \in I_1) + v_j\bm{1}(j\in I_2) +
      (w_{j,1} + w_{j,2}) \bm{1}(j \in I_3);
    \end{align*}
    \EndFor\
    \For{each $t \in \mathcal{T}_2 \cup \mathcal{T}_3$}
    \Comment{The M-step of $h_c(\cdot)$}
    \begin{align*}
      C(t) & \gets \sum_{j=1}^n \bm{1}(t_j\le t)\bm{1}(j\in I_2) +
             (w_{j,2} + w_{j,3})\bm{1}(t_j\le t)\bm{1}(j\in I_3);\\
      h_c(t) & \gets \frac{\dif C(t)}{\sum_{j=1}^{n} Y_{j}(t)};
    \end{align*}
    \EndFor\
    \For{each $t \in \mathcal{T}_1 \cup \mathcal{T}_3$}
    \Comment{The M-step of $h_0(\cdot\mid Z_j=1)$}
    \begin{align*}
      N(t) & \gets \sum_{j=1}^n \bm{1}(t_j\le t) \bm{1}(j \in I_1) +
             w_{j,1}\bm{1}(t_j\le t)\bm{1}(j\in I_3);\\
      h_0(t\mid Z_j=1) & \gets \frac{\dif N(t)}
        {\sum_{j=1}^n \bm{1}(t \le t_j) M_j \exp(\bm{x}_{j}^{\top}\bb)};
    \end{align*}
    \EndFor\
    \State{\textbf{go to} Algorithm~\ref{alg:mstep-beta}}
    \Comment{The M-step of $\bb$}
    \State{\textbf{go to} Algorithm~\ref{alg:mstep-gamma}}
    \Comment{The M-step of $\bg$ and $\gamma_0$}
    \Until{Convergence of $\bb$, $\bg$, and $\gamma_0$}
  \end{algorithmic}\label{alg:est}
\end{algorithm}


%

\begin{algorithm}[tbp]
  \small
  \caption{The M-step of $\bb$.}
  \begin{algorithmic}
    \State{\bf input}: $\bb$ from last EM iteration
    \Repeat\
    \For{$k=1,\ldots,p$}
    \Comment{update $\beta_k$}
    \begin{align*}
      \beta_k \gets \frac{s(D_k \beta_k +
      \frac{1}{n} \dot{l}_k(\bb),
      \lambda_1 \alpha_1 \omega_k)}{D_k + \lambda_1(1 - \alpha_1)},
    \end{align*}
    \hspace{0.5cm} where $s(a, b) = {(\lvert a \rvert - b)}_{+} \mathrm{sign}(a)$ is the
soft-thresholding operator.
    \EndFor\
    \Until{Convergence of $\bb$}
  \end{algorithmic}\label{alg:mstep-beta}
\end{algorithm}

\begin{algorithm}[tbp]
  \small
  \caption{The M-step of $\bg$ and $\gamma_0$.}
  \begin{algorithmic}
    \State{\bf input}: $\bg$ and $\gamma_0$ from last EM iteration;
    \Repeat\
    \State{\textbf{do}}
    \Comment{update $\gamma_0$}
    \begin{align*}
      \gamma_0 \gets \gamma_0 + \frac{4}{n} \dot{l}_0(\bg, \gamma_0);
    \end{align*}
    \For{$k=1,\ldots,p$}
    \Comment{update $\gamma_k$}
    \begin{align*}
      \gamma_k \gets \frac{s(B_k {\gamma}_k +
      \frac{1}{n} \dot{l}_k(\bg, \gamma_0),
      \lambda_2\alpha_2 \nu_k)}{B_k + \lambda_2(1 - \alpha_2)},
    \end{align*}
    \hspace{0.5cm} where $s(a, b) = {(\lvert a \rvert - b)}_{+} \mathrm{sign}(a)$ is the
    soft-thresholding operator.
    \EndFor\
    \Until{Convergence of $\bg$ and $\gamma_0$}
  \end{algorithmic}\label{alg:mstep-gamma}
\end{algorithm}

\subsection{Initialization, Tail Completion, and Tuning}

We propose a simple yet pragmatic initialization procedure for the proposed EM
algorithm:
\begin{itemize}\setlength\itemsep{0em}
\item[(i)]
  Setting the event indicators of subjects in Case~3 to be 0.5,
  fit a regular logistic model on event indicators and use the estimated
  coefficients to initialize $\hat{\bg}$ and $\hat{\gamma_0}$;
\item[(ii)]
  Fit a regular Cox model on all the certain events (Case~1) and use the
  estimated coefficients to initialize $\hat{\bb}$;
  For subject $j$, $j\in\{1,\ldots,n\}$, initialize $\hat{S}_{j}$ with the
  fitted survival function evaluated at $t_{j}$; initialize $\hat{h}_{j}$
  with a nearest left neighbor interpolation of the fitted hazard function
  (if no left neighbor, use nearest right neighbor).
\item[(iii)]
  Switching event and censoring for all the certain records (Case~1--2),
  estimate the hazard function for censoring
  by the Nelson-Aalen estimator (without covariates) and obtain the
  corresponding survival function estimate;
  initialize $\hat{\overline{G}}_{j}$ with the fitted survival function
  evaluated at $t_{j}$; initialize $\hat{h}_c(t_{j})$ with a nearest left
  neighbor interpolation of the fitted hazard function
  (if no left neighbor, use nearest right neighbor).
\end{itemize}

The initialization procedure was applied in the simulation studies presented in
Section~\ref{sec:simu} and the results were satisfactory in all scenarios.
In addition, we applied the zero tail completion \citep{sy2000biometrics} in the
E-steps when updating the conditional survival function of event times to avoid
identifiability issue of cure models~\citep{liTaylor2001spl,hanin2014jma}, which
imposes the constraint that $S_0(t^*\mid Z=1)=0$ for any $t^*$ greater than the
largest event time onto the baseline survival function $S_0(t)$.


We select the tuning parameters, $(\lambda_1,\alpha_1,\lambda_2,\alpha_2)$,
through a grid search.
To be more specific, we generate an
equally-spaced decreasing sequence in logarithm scale of the tuning parameters,
$\lambda_1$ and $\lambda_2$, respectively, from the smallest values that produce
all zero coefficients for the corresponding model component.
Similarly, we can select $\alpha_1$ and $\alpha_2$, respectively, from an
equally-spaced sequence between 0 and 1; in practice it is also common to simply
set them as some constant close to 1.
As we have multiple tuning parameters, we propose to use the Bayesian
information criteria (BIC) for computational efficiency.
In particular, we adopted the BIC for censored survival models
\citep{volinsky2000biometrics}, where the penalty term is proportional to the
effective sample size, i.e., number of uncensored observations, instead of
number of observations.
Alternatively, one may select the final model by K-fold cross-validation,
where the observation with the largest event time must always stay in the
training set to avoid numerical issue due to zero-tail completion.
However, the tuning procedure using cross-validation can be much more
computationally intensive than using the BIC.

\if1\blind
We have made our implementation of the proposed methods available in a
user-friendly R package (name omitted for double-anonymized peer review).
\fi

\if0\blind
We have made our implementation of the proposed methods available in a
user-friendly R package \pkg{intsurv}, which can be accessed at
\href{https://cran.r-project.org/package=intsurv}{https://cran.r-project.org/package=intsurv}.
\fi

\section{Simulation Study}%
\label{sec:simu}

\subsection{Setup and Data Generation}%
\label{sec:simu:setup}

We designed simulation settings to mimic our suicide risk application. The
main concern was the  estimation/prediction performance of the proposed
model and its ability to distinguish the true events among the uncertain events.

The simulated dataset consisted of subjects in three different cases as
illustrated in Figure~\ref{fig:diagram}. 
Accordingly, we first generated the survival data with a cure fraction,
which served as the ground truth.
Then we randomly assigned each subject to different cases from
its ground truth.  To be more specific, for each subject who actually has
observed event, we assigned it randomly to Case~1 with probability $p_1$, or
Case~3a with probability $1 - p_1$.  For each susceptible subject whose event
time was censored, we randomly assign it to Case~2a with probability $p_2$, or
Case~3b with probability $1 - p_2$.  For each cured subject, we randomly assign
it to Case~2b with probability $p_3$, or Case~3c with probability $1 - p_3$.
A diagram of the data generation is provided in Supplementary Materials.

In the following simulation settings, we fixed the size of Case~1 and Case~2 to
be approximately 100 and 1,900, respectively.
The censoring rate among these certain records (Case~1--2) was thus about 95\%,
which simulated the severe censoring in the real suicide risk data.
The size of Case~3 was set to be 100 or 200, i.e., the same or twice of the size
of Case~1, which resulted in the sample size being 2,100 or 2,200,
respectively.
Within Case~3, we set the proportion of Case~3a to be 30\% or 70\%,
respectively.
The value of $p_1$ could then be computed as the size ratio of Case~3a to
Case~1.
For simplicity, we let $p_2 = p_3$.  Thus, the size ratio of Case~2 to Case~3b
\& Case~3c could be computed as $p_2 / (1 - p_2)$, which determined the value of
$p_2$ (and $p_3$).

We generated susceptible indicators from logistic model with an intercept
term.  We set the coefficient of the intercept term to $-0.847$ or $0.847$ so
that the baseline cure rate (when all covariates are zeros) was 70\% or 30\%,
respectively.  For susceptible subjects, the event times were generated from
Weibull-Cox model with baseline hazard function
$h_0(t; \bx) = 0.2t\exp(\bx^{\top}\bb)$;
For cured subjects, the event times were set to be infinity.
The censoring times were generated independently with
the event times from exponential distribution truncated at 10.  The rate
parameter of the truncated exponential distribution is tuned so that the desired
case's decomposition was attained.

The different specification on the baseline cure rate, the size of Case~3, and
the proportion of Case~3a resulted in totally eight different simulation
scenarios.  More specifically, the size of Case~3 was set to be the same with
the size of Case~1 in the scenario (1) to (4) and twice in the scenario (5) to
(8); The proportion of Case~3a was set to be 30\% in the scenario (1), (2), (5),
and (6), and 70\% in the remaining scenarios; The baseline cure rate was set to
be 30\% in scenarios (1), (3), (5), and (7), and 70\% in the remaining
scenarios. We applied these eight scenarios to a low-dimensional setting ($p=8$) without
regularization and a large-dimensional setting ($p=100$) with regularization in Section~\ref{sec:simu:low} and
Section~\ref{sec:simu:high}, respectively. The number of replicates for each
scenario was 1,000.


\subsection{Competing Methods}%
\label{sec:comp-meth-eval}

The proposed method is denoted by I.Cure.
We considered three competing approaches based on the Cox cure rate model.
The first approach (denoted by Cure1) fits the Cox cure model to those certain
records (Case~1--2) only, which simply excludes subjects with uncertain events.
The second approach (denoted by Cure2) fits the Cox cure model for all subjects,
where subjects having uncertain events in Case~3 are all treated as censored.
In contrast, the third method (denoted by Cure3) fits the Cox cure model by
taking the uncertain events all as actual events ignoring their uncertainty.
We did not include the Cox models as they performed worse than their
corresponding Cox cure rate models in the presence of a cure fraction.

We also included an oracle procedure (denoted
by O.Cure) based on Cox cure rate model where the true event indicators in Case~3
were all given.
The oracle procedure is infeasible in practice, but it provides a reference on
the best achievable performances in the comparison.

\subsection{Evaluation Metrics}

For both model components, we measured the estimation performance by the
$\ell_2$-norm of $(\hat{\bb} - \bb^0)$, i.e.,
$\|\hat{\bb}-\bb^0\| = {[{(\hat{\bb}-\bb^0)}^{\top}(\hat{\bb}-\bb^0)]}^{1/2}$,
where $\hat{\bb}$ and $\bb^0$ represent the estimated covariate coefficients
and the underlying true covariate coefficients, respectively.

For subject $j$ in Case~3, we took estimated posterior probability $w_{j,1}$
from Algorithm~\ref{alg:est}, i.e., the estimated probability of subject $j$
having actual event at time $t_j$, for the identification of uncertain events.
We also computed the oracle posterior probability based on Bayes rule for such
identification using the underlying true models and simulated datasets,
providing references of the best achievable performances.  Given the simulated
true event indicators in Case~3, we were able to evaluate the identification
correctness by the area under curve (AUC) of the receiver operating
characteristic curve (ROC).

For large-dimensional models, we were interested in the variable selection
performance of regularized estimation.
The performance was measured by the true positive rate (TPR) among those
non-zero coefficients and the false positive rate (FPR) among those zero
coefficients, which were defined as follows:
\begin{align*}
   \mathrm{TPR} = \frac{\sum_{k=1}^p
  \bm{1}(\hat{\beta}_k  \neq 0) \bm{1}(\beta_k^0 \neq 0)}
  {\sum_{k=1}^p \bm{1}(\beta_k^0 \neq 0)},~ \mathrm{FPR} = \frac{\sum_{k=1}^p
  \bm{1}(\hat{\beta}_k  \neq 0) \bm{1}(\beta_k^0 = 0)}
  {\sum_{k=1}^p \bm{1}(\beta_k^0 = 0)},
\end{align*}
where $\beta_k^0$ is the $k$-th underlying truth covariate coefficient,
and $\hat{\beta}_k$ is the $k$-th estimated covariate coefficient from the
regularized estimation procedure. 

\subsection{Setting 1}%
\label{sec:simu:low}

In Setting 1, we considered $p=8$ covariates, $x_k$,
$k\in\{1,\ldots,8\}$, that were randomly generated from multivariate normal
distribution with marginal means zero and variances one.  The correlation
between $x_k$ and $x_l$, $k\neq l$, was set to be $\rho^{\lvert k - l \rvert}$,
where $\rho = 0.2$.  We considered two overlapped but different sets of
covariates in the logistic model and the Weibull-Cox model.  The covariates
$x_4$--$x_8$ were used in the logistic model for generating the susceptible
indicators for each subject.  For susceptible subjects, the covariates
$x_1$--$x_5$ were used in the Weibull Cox model for generating event times.
The underlying true non-zero covariate coefficients were fixed to be
$\bb^0 = \bg^0 = {(0.8, 0.5, 1, 0.3, 0.6)}^{\top}$ in both parts.  We assumed
that the underlying true sets of covariates were both known in the
low-dimensional settings.  In other words, among eight covariates, only
$x_1$--$x_5$ were considered in the Cox model~\eqref{eqn:cox} and only
$x_4$--$x_8$ were considered in the logistical model~\eqref{eqn:logis}, and no
regularization was applied in all methods.

The estimation results are presented in Table~\ref{tab:est:mod}, which show that
I.Cure gave better estimation performance in both model parts than those
competing methods and close performance compared with the oracle method in
almost all 8 scenarios.
In addition, the AUC's on identifying the true events from uncertain records
were about 75\% using I.Cure, about 13\% lower than those from the oracle method
across all scenarios; the detailed results are presented in
Supplementary Materials.



\begin{table}[tbp]
  \small\sf\centering
  \caption{Simulation results of Setting 1: comparison on parameter estimation performance for the incidence part
    and the latency part, respectively, through mean of $\|\hat{\bg}-\bg^0\|$
    and $\|\hat{\bb}-\bb^0\|$ (with the standard deviation given in
    parenthesis).}\label{tab:est:mod}
  \setlength{\tabcolsep}{2.5pt}
  \begin{tabular}{cccccccccccc}
    \toprule
             & \multicolumn{5}{c}{$\|\hat{\bg}-\bg^0\|$}
             &        & \multicolumn{5}{c}{$\|\hat{\bb}-\bb^0\|$}                                         \\
    \cmidrule(lr){2-6}\cmidrule(lr){8-12}
    \# & O.Cure & I.Cure & Cure1  & Cure2  & Cure3  &
             & O.Cure & I.Cure & Cure1  & Cure2  & Cure3                                                  \\
    \midrule
    1        & 1.05   & 1.17   & 1.28   & 1.37   & 1.20   &  & 0.26   & 0.29   & 0.31   & 0.33   & 0.70   \\
             & (0.42) & (0.45) & (0.43) & (0.44) & (0.54) &  & (0.10) & (0.11) & (0.12) & (0.12) & (0.10) \\
    [1ex]
    2        & 0.57   & 0.69   & 0.74   & 0.78   & 1.17   &  & 0.28   & 0.30   & 0.33   & 0.34   & 0.71   \\
             & (0.21) & (0.28) & (0.26) & (0.26) & (0.37) &  & (0.09) & (0.10) & (0.12) & (0.12) & (0.11) \\
    [1ex]
    3        & 0.86   & 1.11   & 1.39   & 1.54   & 0.87   &  & 0.23   & 0.30   & 0.33   & 0.35   & 0.38   \\
             & (0.33) & (0.40) & (0.41) & (0.41) & (0.38) &  & (0.08) & (0.11) & (0.12) & (0.13) & (0.10) \\
    [1ex]
    4        & 0.47   & 0.70   & 0.88   & 0.95   & 0.63   &  & 0.24   & 0.30   & 0.34   & 0.35   & 0.39   \\
             & (0.17) & (0.27) & (0.26) & (0.26) & (0.21) &  & (0.08) & (0.10) & (0.12) & (0.12) & (0.11) \\
    [1ex]
    5        & 0.93   & 1.15   & 1.35   & 1.52   & 1.34   &  & 0.24   & 0.29   & 0.33   & 0.35   & 0.87   \\
             & (0.35) & (0.45) & (0.45) & (0.44) & (0.65) &  & (0.08) & (0.10) & (0.11) & (0.12) & (0.08) \\
    [1ex]
    6        & 0.48   & 0.68   & 0.79   & 0.89   & 1.62   &  & 0.24   & 0.28   & 0.32   & 0.34   & 0.87   \\
             & (0.17) & (0.27) & (0.25) & (0.25) & (0.43) &  & (0.08) & (0.10) & (0.12) & (0.12) & (0.08) \\
    [1ex]
    7        & 0.65   & 1.03   & 1.54   & 1.81   & 0.75   &  & 0.19   & 0.33   & 0.36   & 0.38   & 0.44   \\
             & (0.23) & (0.35) & (0.38) & (0.38) & (0.27) &  & (0.07) & (0.12) & (0.12) & (0.13) & (0.08) \\
    [1ex]
    8        & 0.35   & 0.74   & 1.09   & 1.22   & 0.78   &  & 0.19   & 0.30   & 0.35   & 0.37   & 0.47   \\
             & (0.12) & (0.27) & (0.23) & (0.22) & (0.20) &  & (0.06) & (0.11) & (0.12) & (0.13) & (0.11) \\
    \bottomrule
  \end{tabular}
  \end{table}

We estimated the standard error estimates based on inter-quartile
range and normal approximation.  To check the performance of the proposed method
in making inferences about the unknown covariate coefficients, we used bootstrap
with 200 bootstrap samples.  The mean of standard error (SE) estimates and the
empirical SEs for the coefficient of all covariates are presented in
Supplementary Materials.
The bootstrap SE estimates were close to the empirical SEs of the coefficient
estimates in most of the settings.

\subsection{Setting 2}%
\label{sec:simu:high}


\begin{table}[btp]
  \small\sf\centering
  \caption{Simulation results of Setting 2: comparison on variable selection performance for the logistic model
    part and Cox model part through mean of true positive rate
    (TPR) and false positive rate (FPR) in percentage (with the standard
    deviation given in parenthesis).}\label{tab:simu:var}
  \setlength{\tabcolsep}{2.5pt}
  \begin{tabular}{ccccccccccccc}
    \toprule
             &         & \multicolumn{5}{c}{Logistic}
             &         & \multicolumn{5}{c}{Cox}                                                                    \\
    \cmidrule(lr){3-7}\cmidrule(lr){9-13}
    \# & Measure & O.Cure & I.Cure & Cure1  & Cure2  & Cure3  &  & O.Cure & I.Cure & Cure1  & Cure2  & Cure3  \\
    \midrule
    1        & TPR     & 22.0   & 29.3   & 24.3   & 22.0   & 7.4    &  & 95.4   & 93.5   & 89.5   & 89.6   & 92.1   \\
             &         & (29.7) & (31.1) & (25.3) & (12.5) & (33.3) &  & (13.1) & (21.4) & (7.9)  & (13.2) & (19.9) \\
             & FPR     & 1.14   & 1.93   & 1.52   & 1.40   & 0.41   &  & 3.29   & 4.14   & 2.98   & 2.96   & 2.41   \\
             &         & (1.96) & (2.70) & (3.16) & (0.66) & (2.44) &  & (2.68) & (3.07) & (3.68) & (2.03) & (2.80) \\
    [1ex]
    2        & TPR     & 70.5   & 66.2   & 66.5   & 65.7   & 29.5   &  & 94.7   & 92.9   & 89.6   & 87.6   & 89.5   \\
             &         & (29.4) & (35.3) & (27.3) & (29.1) & (31.7) &  & (16.7) & (19.0) & (17.1) & (7.9)  & (24.8) \\
             & FPR     & 2.42   & 3.06   & 2.54   & 2.61   & 0.73   &  & 3.24   & 4.50   & 3.20   & 3.18   & 3.05   \\
             &         & (2.44) & (1.39) & (2.68) & (2.51) & (2.47) &  & (3.46) & (2.51) & (2.98) & (2.82) & (2.69) \\
    [1ex]
    3        & TPR     & 29.1   & 42.3   & 29.7   & 26.3   & 17.0   &  & 97.5   & 96.6   & 90.0   & 87.0   & 97.5   \\
             &         & (29.0) & (33.7) & (30.6) & (28.0) & (32.1) &  & (21.7) & (7.6)  & (19.8) & (17.3) & (4.7)  \\
             & FPR     & 1.17   & 2.35   & 1.67   & 1.60   & 0.63   &  & 3.40   & 4.29   & 3.16   & 2.89   & 3.16   \\
             &         & (2.29) & (1.90) & (2.77) & (3.55) & (1.08) &  & (2.77) & (2.58) & (2.90) & (4.54) & (2.26) \\
    [1ex]
    4        & TPR     & 80.6   & 81.8   & 74.1   & 68.3   & 61.2   &  & 97.3   & 97.2   & 90.5   & 87.8   & 96.4   \\
             &         & (28.5) & (33.1) & (38.8) & (28.8) & (21.3) &  & (22.3) & (10.6) & (9.0)  & (17.6) & (3.6)  \\
             & FPR     & 2.42   & 3.64   & 2.82   & 2.55   & 1.42   &  & 3.10   & 4.40   & 3.09   & 2.99   & 3.29   \\
             &         & (2.18) & (2.55) & (2.03) & (2.64) & (2.24) &  & (2.66) & (3.46) & (2.79) & (2.75) & (2.25) \\
    [1ex]
    5        & TPR     & 27.5   & 36.1   & 26.6   & 24.3   & 3.4    &  & 97.4   & 95.2   & 91.8   & 88.7   & 95.0   \\
             &         & (17.9) & (30.6) & (33.3) & (29.9) & (14.6) &  & (19.4) & (20.3) & (9.1)  & (21.7) & (5.1)  \\
             & FPR     & 1.12   & 2.39   & 1.37   & 1.42   & 0.15   &  & 3.34   & 5.30   & 3.23   & 3.07   & 2.26   \\
             &         & (1.17) & (2.29) & (1.83) & (2.67) & (3.41) &  & (2.18) & (2.86) & (2.66) & (2.68) & (4.61) \\
    [1ex]
    6        & TPR     & 79.1   & 73.9   & 72.4   & 69.3   & 15.0   &  & 97.3   & 94.2   & 90.3   & 87.6   & 93.9   \\
             &         & (32.7) & (29.5) & (30.5) & (27.7) & (20.8) &  & (11.8) & (24.8) & (15.1) & (12.9) & (13.2) \\
             & FPR     & 2.43   & 3.82   & 2.67   & 2.60   & 0.26   &  & 3.10   & 5.62   & 2.92   & 2.77   & 3.45   \\
             &         & (2.57) & (2.39) & (2.62) & (0.85) & (2.59) &  & (2.84) & (2.64) & (4.03) & (2.37) & (2.61) \\
    [1ex]
    7        & TPR     & 43.9   & 61.3   & 42.4   & 31.0   & 21.7   &  & 99.2   & 98.8   & 89.4   & 85.7   & 98.9   \\
             &         & (32.7) & (27.9) & (29.8) & (38.2) & (22.8) &  & (17.7) & (7.8)  & (19.1) & (3.8)  & (18.2) \\
             & FPR     & 1.15   & 3.22   & 1.94   & 1.72   & 0.45   &  & 3.47   & 5.38   & 3.05   & 2.85   & 3.29   \\
             &         & (2.97) & (1.33) & (2.07) & (1.62) & (2.72) &  & (3.79) & (2.24) & (2.83) & (2.65) & (2.47) \\
    [1ex]
    8        & TPR     & 90.9   & 91.7   & 82.1   & 76.1   & 62.6   &  & 99.4   & 98.7   & 92.3   & 87.9   & 98.6   \\
             &         & (32.2) & (29.4) & (28.6) & (31.4) & (41.8) &  & (19.5) & (7.0)  & (22.5) & (5.9)  & (5.2)  \\
             & FPR     & 2.19   & 4.23   & 2.80   & 2.67   & 1.32   &  & 2.49   & 5.49   & 2.66   & 2.41   & 3.33   \\
             &         & (2.65) & (2.44) & (2.15) & (2.81) & (2.08) &  & (3.06) & (2.82) & (2.73) & (4.04) & (2.64) \\
    \bottomrule
  \end{tabular}
\end{table}

In Setting 2, we increased the number of covariates to 100.  The
covariates were randomly generated from multivariate normal distribution with
marginal means zero and variances one.  The correlation between $x_k$ and $x_l$,
$k\neq l$, $k,l\in\{1,\ldots,100\}$, was set to be $\rho^{\lvert k - l \rvert}$,
where $\rho = 0.2$. For $x_1$--$x_8$, we considered a same set of covariate
coefficients in the low-dimensional settings.  We set the coefficients of the
remaining covariates to be all zero. We conducted the proposed regularized
estimation approach with the estimation criteria given
in~\eqref{eqn:obs-like-reg}. 
The number
of the grid points was set to be 10 for both $\lambda_1$ and $\lambda_2$, which
resulted in a 10 by 10 two-dimensional grid. The minimum tuning parameter was
set to be 0.1 of the smallest value that produces all zero
coefficients. For simplicity, we set $\alpha_1 = \alpha_2 = 1$.

The variable selection results are summarized in Table~\ref{tab:simu:var}.  The
TPR's from I.Cure were evidently greater than those from the competing methods
under almost all the settings for both model components.
On the other hand, the FPR's from
I.Cure were only slightly larger than those from the competing methods.
Overall, I.Cure gave considerably high TPR's compared with the oracle method at
a cost of slightly high FPR's.



In addition, we investigated the prediction performance of different
methods using an out-of-sample procedure. The results show that I.Cure led to
consistently better out-of-sample AUC for the
prediction of susceptibility and better out-of-sample C-index for the prediction
of survival time than the
competing methods in all the scenarios.
We also considered using adaptive weights and obtained similar results compared
to using unit weights.
The detailed results are provided 
in Supplementary Materials.


\section{Understanding Subsequent Suicide Attempts with
  Uncertain Diagnosis Records}%
\label{sec:conn-suic}


We applied the proposed method and three competing methods to the Connecticut
suicide attempt data. The elastic-net penalization was used  because some
predictors could be highly correlated. Our main interests were to (1) identify
relevant diagnosis categories that were predictive of subsequent
suicide attempts after the initial hospitalization due to suicidal behaviors,
(2) verify that the proposed integrative approach has the potential of improving
predictive power over methods that do not address uncertainty, and (3) investigate
the suspected attempts among subjects in Case~3, to understand which injuries of
undetermined intent were more likely self-inflicted.\\





\noindent \textbf{Risk factor selection}

The odds ratio and hazard ratio estimates from the proposed method and
those competing ones are summarized in Table~\ref{tab:suci:vari}, in which we
also report the prevalence of each selected ICD-9 category in each of the three
cases of patients. Several
ICD-9 categories related to mental health and drug abuse were selected by the
different approaches. As expected, the prevalence of each selected ICD-9
category is the highest among Case 1 patients and the lowest among Case 2
patients. The number of variables
selected in the incidence part for modeling susceptible status ranged from 6 to
11, while only 1 to 2 variables were selected for modeling the conditional
time-to-event distribution. This suggests that understanding the timing of
suicide is generally a much harder problem than understanding its
occurrence.

\begin{table}[tbp]
  \small\sf\centering
  \caption{Suicide risk study: prevalence of selected ICD-9
    categories in three cases of patients and the exponentiated coefficient
    estimates (hazard or odd ratios) from
    different approaches.}\label{tab:suci:vari}
  \setlength{\tabcolsep}{2pt}
  \setlength{\tymin}{2pt}
  \begin{tabulary}{\textwidth}{crrrccccccccL}
    \toprule & \multicolumn{3}{c}{Prevalence \%}
             & \multicolumn{2}{c}{I.Cure}
             & \multicolumn{2}{c}{Cure1}
             & \multicolumn{2}{c}{Cure2}
             & \multicolumn{2}{c}{Cure3}                                         \\
    \cmidrule(lr){2-4}
    \cmidrule(lr){5-6} \cmidrule(lr){7-8}
    \cmidrule(lr){9-10} \cmidrule(lr){11-12}
    ICD-9    & $I_1$           & $I_2$ & $I_3$
             & $e^{\hat{\bg}}$ & $e^{\hat{\bb}}$
             & $e^{\hat{\bg}}$ & $e^{\hat{\bb}}$
             & $e^{\hat{\bg}}$ & $e^{\hat{\bb}}$
             & $e^{\hat{\bg}}$ & $e^{\hat{\bb}}$
             & Description                                                       \\
    \midrule
    296      & 55.3            & 26.0  & 29.4 & 1.78 &      & 1.80 &      & 1.77 &      & 1.77 &
             & Episodic mood disorders                                           \\
    298      & 19.2            & 7.6   & 8.9  & 1.08 &      &      &      & 1.06 &      & 1.01 &
             & Other Nonorganic Psychoses                                        \\
    300      & 47.6            & 25.8  & 28.3 & 1.10 &      & 1.04 &      & 1.10 &      & 1.02 &
             & Anxiety, Dissociative and Somatoform Disorders                    \\
    301      & 13.9            & 4.6   & 5.6  & 1.32 &      & 1.22 &      & 1.32 &      & 1.21 &
             & Personality disorders                                             \\
    304      & 12.4            & 4.0   & 5.3  & 1.31 & 1.01 & 1.11 &      & 1.17 &      & 1.60 &
             & Drug Dependence                                                   \\
    312      & 8.7             & 3.0   & 3.7  & 1.10 &      &      &      & 1.10 &      & 1.06 &
             & Disturbance of Conduct, Not Elsewhere Classified                  \\
    313      & 6.8             & 2.2   & 2.7  & 1.23 &      & 1.01 &      & 1.19 &      & 1.15 &
             & Disturbance of Emotions Specific to Childhood and Adolescence     \\
    319      & 1.6             & 0.3   & 0.4  & 1.02 &      &      &      & 1.21 &      &      &
             & Unspecified intellectual disabilities                             \\
    507      & 6.2             & 2.6   & 3.0  &      &      &      &      &      &      & 1.05 &
             & Pneumonitis Due to Solids and Liquids                             \\
    564      & 31.8            & 12.6  & 14.8 & 1.02 &      &      &      & 1.04 &      & 1.02 &
             & Functional Digestive Disorders, Not Elsewhere Classified          \\
    977      & 1.8             & 0.4   & 0.6  &      &      &      &      &      &      & 1.05 &
             & Poisoning by Other and Unspecified Drugs and Medicinal Substances \\
    V62      & 24.7            & 14.5  & 16.2 & 1.31 & 1.01 & 1.32 & 1.00 & 1.33 & 1.02 & 1.25 & 1.00
             & Other Psychosocial Circumstances                                  \\
    \bottomrule
  \end{tabulary}
\end{table}

There were 6 ICD-9 predictors selected by all methods in the incidence part.  We
found that these results were well supported by existing literature.  In
particular, all models suggested that patients with personality disorders (ICD-9
301) had a higher risk of being susceptible and making further suicide
attempts. Among these patients, about 35.2\% had
borderline personality disorder (ICD-9 301.83) and 15.9\% had chronic
depressive personality disorder (ICD-9 301.12) in our data.
Similar finding had been reported by e.g., \citet{harris1997bjp},
\citet{lieb2004lancet} and \citet{mcgirr2007jcp}, among others.
Another interesting selected ICD-9 category was V62 indicating psychosocial
circumstances. The majority (66.3\%) of the patients with V62 had
suicide ideations (ICD-9 V62.84),
which was one of the conditions for identifying determined suicide attempts in
addition to the E95 codes. Among patients with suicide
ideation diagnosis, about 88.2\% were diagnosed at the initial hospitalization,
which strongly suggested that initial hospitalization could provide important
information relevant to the prevention of further suicide attempts.
Other commonly selected ICD-9 categories were related
to mood disorders (ICD-9 296), anxiety (ICD-9 300), drug dependence (ICD-9 304),
and disturbance of emotions (313), which are well known to be important suicide
risk factors. Besides the aforementioned 6 categories, our proposed
method selected 4 more, including nonorganic psychoses (ICD-9 298), disturbance
of conduct (ICD-9 312), unspecified intellectual disabilities (ICD-9 319), and
functional digestive disorders (ICD-9 564). The associations between these
conditions and suicide risk are well supported by existing studies
\citep{Falcone2010,Linker2012,Ludi2012,Spiegel2007}. 
  In our data, we found that among patients whose event times were censored,
  59.8\% of them were without any of these 10 conditions identified by the
  incidence part of our model. As such, these patients could be regarded as with
  the least risk of having subsequent suicide attempt. Our findings could help
  clinical practice for a better allocation of the limited resource for suicide
  prevention.

Our proposed method selected drug dependence (ICD-9 304) and psychosocial
diagnoses (ICD-9 V62) in both the incidence part and
the latency part, which suggested that these two conditions not only increased
the probability of being susceptible to subsequent suicide attempt after the
initial hospitalization but also associated with a reduced time to subsequent
attempt. On the other hand, the other three naive methods all missed the selection of drug
dependence. The link between drug dependence and suicide attempt
is well known in children and adolescents \citep{Berman1990, DoshiChen2020, LuoChen2022}, and suicide death
is recognized as a major component of the ongoing opioid crisis by the National Institute
on Drug Abuse. 


We also attempted to conduct post-selection inference on the I.Cure model. To
this end, we simply refitted the selected model without regularization and performed bootstrap to
obtain 95\% confidence intervals of the coefficient estimates. According to \citet{zhao2017defense}, such a naive two-step procedure could
still yield asymptotically valid inference under certain conditions. The results
are shown in 
Supplementary Materials.
Indeed, most of the predictors were significant except ICD-9 298, 300, and V62
in the incidence part.\\

\noindent\textbf{Prediction performance}

We evaluated the model prediction performance based on the selected
variables from those different approaches by a random splitting procedure.  To be more
specific, we randomly split subjects in Case~1--2, respectively, into a training
set with probability 0.6 and a test set with probability 0.4.  The number of
subjects in Case~1 and Case~2 in the testing set is thus 294 and 2,657,
respectively.  The proposed method was fitted to the split training set and
Case~3, while for the Cure1 method, only the split training set was utilized.
By definition, the Cure2 method and the Cure3 method took uncertain events in
Case~3 as censoring and actual events, respectively.  
The procedure was repeated 1,000 times.

We applied the time-varying sensitivity and specificity estimator proposed by
\citet{uno2007jasa} to evaluate the prediction performance based on the
estimated survival probabilities over the test set that consists of Case~1--2
only.  We focused on short-term survival in the first 4 months when about 40\%
of subjects in Case~3 had observed time.  Motivated by the clinical setting, in which suicide related interventions would
not be universally implemented but be targeted on those at highest risk,
we evaluated the predictive performance within an estimated high-risk group of small size.
\begin{figure}
  \centering
  \includegraphics[width=\linewidth]{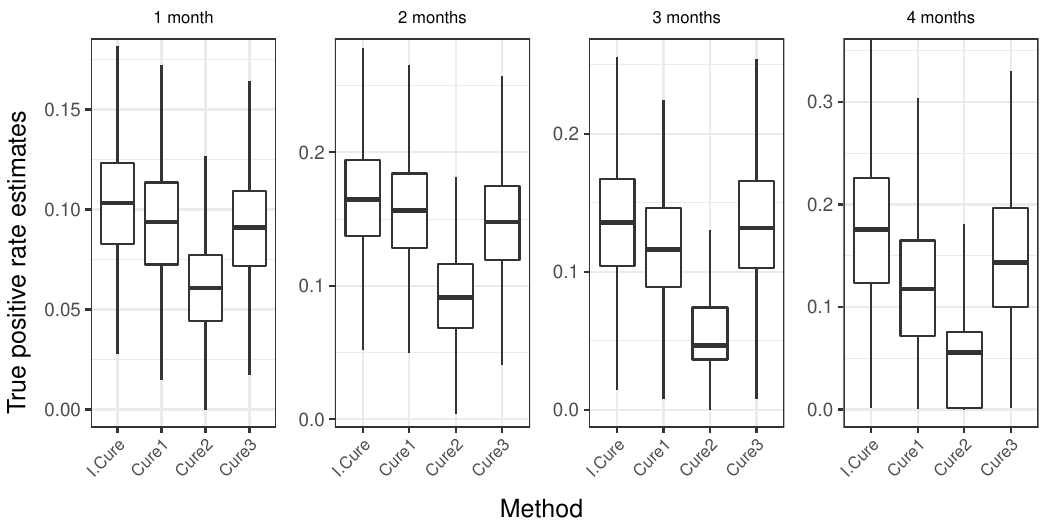}
  \caption[The boxplots of true positive rate.]%
  {Suicide risk study: boxplot matrix of true positive rate of predicting 1--4 months survival
    with the proportion of high-risk group controlled at 5\%.}%
  \label{fig:data:tpr}
\end{figure}
Figure~\ref{fig:data:tpr} provides a visual comparison of the prediction
performance for 1--4 month's survival when the proportion of the high risk group
is controlled at 5\% of the population.  The TPR from the I.Cure method was
the largest in the first four months compared with those competing methods.  The
Cure1 and Cure3 method provided the second largest TPR in the first two month
and last two months, respectively, while the Cure2 method always gave the worst
performance in the first four months.



\noindent\textbf{Understanding suspected attempts}

We then investigated the suspected attempts in Case~3, to understand which
injuries of undetermined intent were more likely to be self-inflicted. The estimated
posterior probabilities were used to classify whether these patients had
actual attempts, censored events, or cured using the Bayes' rule. Among those
173 patients in Case~3 with injury of undetermined origin, 36 (20.8\%) patients
were classified as having an actual events (determined suicide attempts), 137
(79.1\%) patients were classified as being cured, and none of the patients were
classified as being censored under risk. The result was consistent with our
early conclusion on the sufficient follow-up period.

\begin{figure}[tbp]
  \centering
  \begin{minipage}[t]{.48\textwidth}
    \centering
    \includegraphics[width=0.95\linewidth]{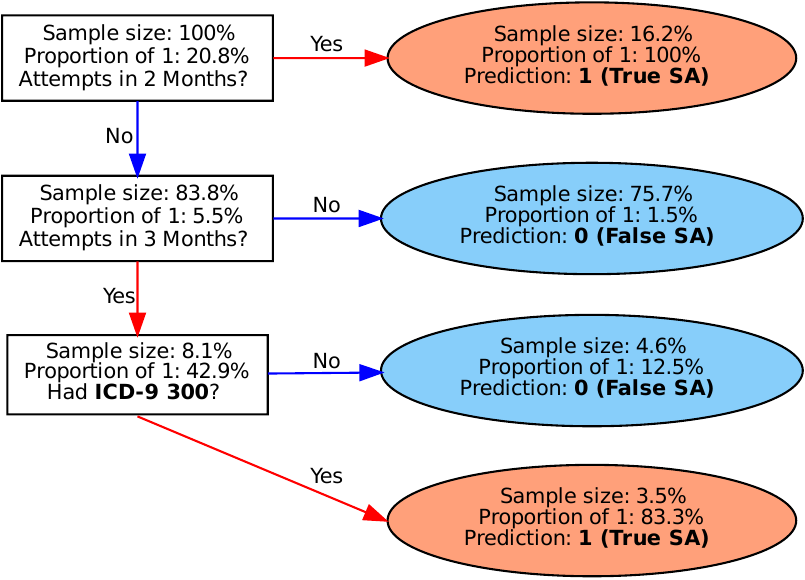}
    (a) Model with survival times
  \end{minipage}%
  \begin{minipage}[t]{.48\textwidth}
    \centering
    \includegraphics[width=0.95\linewidth]{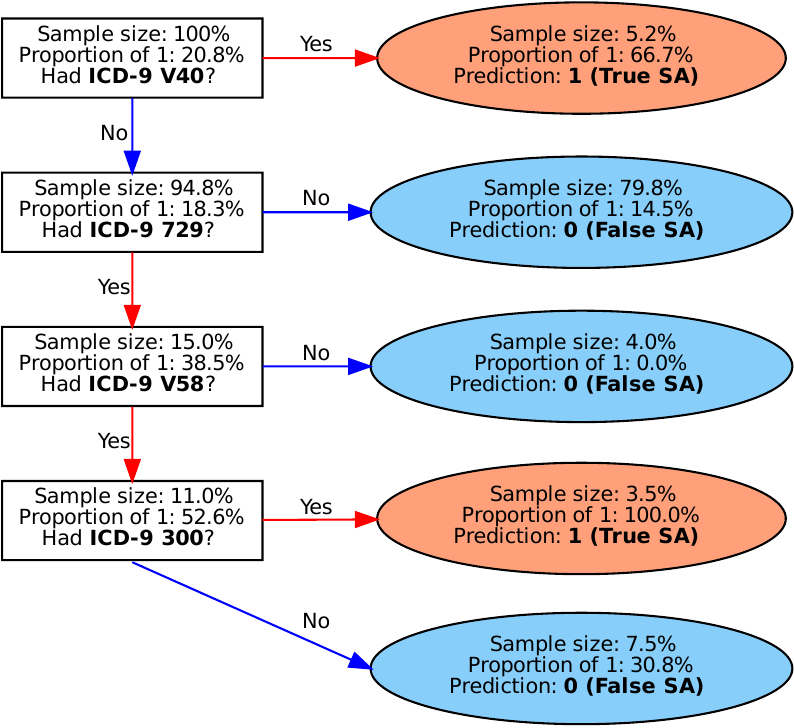}
    (b) Model without survival times
  \end{minipage}
  \caption[Classification Tree of Case 3a among Case 3.]%
  {Suicide risk study: classification trees of actual suicide attempts (SA) among subjects having
    undetermined SA based on identified Case 3a among Case 3 from I.Cure.}%
  \label{fig:data:tree-case3a}
\end{figure}


Given the identified subjects in Case~3a from I.Cure, we fitted two classification
tree models with a regular complexity pruning procedure among subjects in Case~3
to further explore ICD-9 categories that predict Case~3a.
Figure~\ref{fig:data:tree-case3a} provides their visualization.

In tree model (a), we included the survival times of the suspected suicide
attempt as one of the predictors. The results show that the patients
who have had undetermined suicide attempts within 2 months
(accounting for 16.2\% of patients) after the initial
hospitalization were very likely to have made actual suicide attempts.
On the other hand, those patients who have not made a
suspected attempt within 3 months (accounting for 75.7\% of patients)
were unlikely to have made actual suicide attempts.  Among the
remaining (8.1\%) patients, those having anxiety, dissociative, and
somatoform disorders (ICD-9 300) were predicted to have had an actual
suicide attempt.
Then, in tree model (b), we excluded the survival times to see how
well we could use historical information to predict Case~3a.
In the root node, patients having mental and behavioral problems (ICD-9 V40)
were predicted to have had actual suicide attempts.  In the second
node, the remaining (94.8\%) patients without disorders of soft tissues (ICD-9
729) were predicted to not have had actual suicide attempts.
About 85\% of patients were classified to the corresponding terminal nodes by
the first two nodes.
For the remaining (15.0\%) patients, those who required continued
care after initial treatment (ICD-9 V58) and had anxiety, dissociative, and
somatoform disorders (ICD-9 300) were predicted to have had actual attempts.

Out of 36 identified cases, model (a) was able to capture
34 of them, while model (b) only captured 15, which suggested that the
estimated baseline hazard functions from I.Cure based on observations in Case~1
and Case~2 played an important role in identifying Case~3a.
Our analysis provided the compositions of these suspected suicide events, which
could be insightful and further examined by domain experts for a more effective
suicide prevention strategy among youth and young adults.







Another interesting question is that, among all the actual attempts, why some
were coded as determined while some were coded as suspected? That is, how to
distinguish the subjects in Case~1 and the identified subjects in Case~3a? We fitted a classification tree model for these two sets of subjects. 
The fitted model suggested that subjects having prior poisoning by unspecified
biological substances (ICD-9 979), or general symptoms (ICD-9 780), such as
physiological exhaustion (ICD-9 780.79), unspecified altered mental status
(ICD-9 780.97) were more likely to be recorded as having suspected suicide
attempts with E98 codes.

\section{Discussion}%
\label{sec:discussion}

We have examined a general modeling setup for survival data with the presence of a
cure fraction and uncertain events indicators.  In our application, we obtained
insightful results on potential risk factors associated with both the occurrence
and timing of subsequent suicide attempt following initial suicide-related
hospitalization among adolescents.
Several important risk factors might have been missed by the
other naive approaches that improperly treat uncertain diagnosis records. We
thoroughly examined the estimated cure and event status among those subjects with
suspected suicide attempt diagnosis, which provided important insights on how to
identify the true suicidal events among them and how to utilize such uncertain
records in predictive modeling of suicide.


Several directions are worth pursuing for future research.  For the proposed
method, the relative risk on the conditional latency is specified by a Cox
proportional hazards model, which may be restrictive. It is interesting to consider testing procedures of the proportional hazard assumption in the presence of outcome uncertainty and cure fraction \citep{wileyto2013assessing}, and develop other more
flexible models to relax the proportional hazards assumption.  In
the estimation procedure, we applied in the
E-steps the zero tail completion, which essentially treats the subjects who have
survived after the largest uncensored event time as not susceptible. This may
be too restricted in some applications. It is worth exploring other
methods such as exponential
and Weibull tail completion \citep{peng2003csda-est}
to model the tail of the conditional survival curve.
In fact, we tried exponential tail completion in our simulation
studies in Section~\ref{sec:simu} and obtained very similar results from using
zero tail completion.

Given the prevalence of uncertainty
in the identification of medical conditions from diagnosis codes, we believe
our approach can be broadly applied to improve many studies on health
conditions with real-world claims data.




\section*{Acknowledgment}

Chen's work was partially supported U.S. National Science Foundation (DMS-1613295, IIS-1718798) and U.S. National Institutes of Health (R01-MH124740).

\section*{Conflict of Interest}

None.


\clearpage

\newcommand{\beginsupplement}{%
  \setcounter{table}{0}
  \renewcommand{\thetable}{S\arabic{table}}%
  \setcounter{figure}{0}
  \renewcommand{\thefigure}{S\arabic{figure}}%
  \setcounter{section}{0}
  \renewcommand{\thesection}{\Alph{section}}%
  \setcounter{subsection}{0}
  \renewcommand{\thesubsection}{\Alph{section}.\arabic{subsection}}%
}
\begin{center}
  \section*{\Huge Supplementary Material}
\end{center}
\beginsupplement

\section{Derivation of Likelihood Functions}

To simplify the notations, we let $p(\cdot)$ be a generic notation representing
the probability density function.
For discrete random variables, we can still view $p(\cdot)$ as the density
function by the adoption of counting measure.

We derive both the complete-data likelihood and the corresponding observed-data likelihood. Firstly, for subject $j \in I_1$, we observe that $T_j=V_j$ and
$\sigma_j=\delta_j=Z_j=\xi_j=1$; no label is missing.
Thus, the likelihood contribution can be derived as follows:
\begin{align*}
  & p(T_j = t_j, \Delta_j = 1, \xi_j = 1, Z_j = 1)\\
  =~ & p(\xi_j = 1, T_j = t_j, \Delta_j = 1 \mid Z_j = 1)p(Z_j = 1)\\
  =~ & p(\xi_j = 1 \mid T_j = t_j, \Delta_j = 1, Z_j = 1)
    p(T_j = t_j, \Delta_j = 1 \mid Z_j = 1)p(Z_j = 1)\\
  =~ & p(\xi_j = 1\mid T_j = t_j)p(T_j = t_j, \Delta_j = 1 \mid Z_j = 1)p(Z_j = 1)\\
  =~ & q_j p_j f_j(t_j)\overline{G}(t_j),
\end{align*}%
where $q_j=p(\xi_j = 1\mid T_j = t_j)$,
$f_j(t)$ is the density function of $V_j$
given that the subject $j$ is susceptible,
and $\overline{G}(t)$ is the survival function of $C_j$.

Secondly, for subject $j \in I_2$, we observe $T_j = C_j$,
$\Delta_j = 0$, and $\xi_j = 1$.
But $Z_j$ is not observed.
We have
\begin{align*}
  & p(T_j = t_j, \Delta_j = 0, \xi_j = 1 \mid Z_j = 1)\\
  =~ & p(\xi_j = 1 \mid T_j = t_j, \Delta_j = 0, Z_j = 1)
       p(T_j = t_j, \Delta_j = 0 \mid Z_j = 1)\\
  =~ & p(\xi_j = 1 \mid T_j = t_j)
       p(C_j = t_j, V_j > C_j \mid Z_j = 1)\\
  =~& q_j g(t_j) S_j(t_j),\\
  & p(T_j = t_j, \Delta_j = 0, \xi_j = 1 \mid Z_j = 0)\\
  =~& p(\xi_j = 1 \mid T_j = t_j, \Delta_j = 0, Z_j = 0)
      p(T_j = t_j, \Delta_j = 0\mid Z_j = 0)\\
  =~ & p(\xi_j = 1 \mid T_j = t_j)
       p(C_j = t_j, V_j > C_j \mid Z_j = 0)\\
  =~ & q_j g(t_j),
\end{align*}%
where $g(t)$ is the density function of $C_j$,
$S_j(t)$ is the survival function of $V_j$
given that the subject $j$ is susceptible.
Therefore, the observed-data likelihood contribution of subject $j$ is
\begin{align*}
  p(T_j = t_j, \Delta_j = 0, \xi_j = 1)
  = & \sum_{z_j \in \{0, 1\}}
      p(T_j = t_j, \Delta_j = 0, \xi_j = 1 \mid Z_j = z_j)
      p(Z_j = z_j)\\
  =~ & q_j g(t_j) [p_j S_j(t_j) + (1 - p_j)],
\end{align*}
and the likelihood contribution under the complete data is
\begin{align*}
  p(T_j = t_j, \Delta_j = 0, \xi_j = 1, Z_j = z_j) =
  q_j g(t) {\left[p_j S_j(t_j)\right]}^{z_j}
  {\left[(1 - p_j) \right]}^{1 - z_j}.
\end{align*}

Lastly, for subject $j \in I_3$, we only observe uncertain event time
$T_j=\min(V_j, C_j)$ with $\xi_j = 0$.
Neither $\Delta_j$ nor $Z_j$ is observable.
We can obtain the likelihood contribution of subject $j$ in Case 3a and Case 3b
under the observed data as follows:
\begin{align*}
  & p(T_j = t_j, \xi_j = 0 \mid Z_j = 1)\\
  =~ & p(T_j = t_j, \xi_j = 0, \delta_j = 0 \mid Z_j = 1) +
       p(T_j = t_j, \xi_j = 0, \delta_j = 1 \mid Z_j = 1)\\
  =~ & p(\xi_j = 0 \mid T_j = t_j, \Delta_j = 1, Z_j = 1)
       p(T_j = t_j, \Delta_j = 1 \mid Z_j = 1) +\\
    ~& p(\xi_j = 0 \mid T_j = t_j, \Delta_j = 0, Z_j = 1)
       p(T_j = t_j, \Delta_j = 0 \mid Z_j = 1)\\
  =~ & p(\xi_j = 0 \mid T_j = t_j)
       \left[p(V_j = t_j, C_j \ge V_j\mid Z_j = 1) +
       p(C_j = t_j, V_j > C_j\mid Z_j = 1)\right]\\
  =~ & (1 - q_j)\left[f_j(t_j) \overline{G}(t_j) + g(t_j) S_j(t_j)\right],
\end{align*}
and thus
\begin{align*}
  p(T_j = t_j, \xi_j = 0, Z_j = 1)
  = & p(T_j = t_j, \xi_j = 0\mid Z_j = 1)p(Z_j=1)\\
  = & (1 - q_j)p_j\left[f_j(t_j) \overline{G}(t_j) + g(t_j) S_j(t_j)\right].
\end{align*}
Similarly, as for the subject $j$ in Case 3c, we can derive the observed data
likelihood contribution as follows:
\begin{align*}
  & p(T_j = t_j, \xi_j = 0 \mid Z_j = 0)\\
  =~ & p(T_j = t_j, \xi_j = 0, \delta_j = 0 \mid Z_j = 0)\\
  =~ & p(\xi_j = 0 \mid T_j = t_j, \Delta_j = 0, Z_j = 0)
       p(T_j = t_j, \Delta_j = 0 \mid Z_j = 0)\\
  =~ & p(\xi_j = 0 \mid T_j = t_j)
       p(C_j = t_j, V_j > C_j\mid Z_j = 1)\\
  =~ & (1 - q_j) g(t_j),
\end{align*}
and thus
\begin{align*}
  p(T_j = t_j, \xi_j = 0, Z_j = 0)
  = & p(T_j = t_j, \xi_j = 0 \mid Z_j = 0) p(Z_j = 0)\\
  = & (1 - q_j) g(t_j) (1 - p_j).
\end{align*}
Notice that $\Pr(\Delta_j = 1 \mid Z_j = 0) = 0$.
Therefore, the observed data likelihood contribution for subject $j$ in Case 3
is
\begin{align*}
  p(T_j = t_j, \xi_j = 0)
  =~ & (T_j = t_j, \xi_j = 0, Z_j = 1) + p(T_j = t_j, \xi_j = 0, Z_j = 0)\\
  =~ & (1 - q_j)p_j\left[f_j(t_j) \overline{G}(t_j) + g(t_j) S_j(t_j)\right] +
       (1 - q_j) g(t_j) (1 - p_j),
\end{align*}
and the likelihood contribution under the complete data is
\begin{align*}
  & p(T_j = t_j, \Delta_j = \delta_j, \xi_j = 0, Z_j = z_j)\\
  =~ & (1 - q_j)
      {\left[p_j f_j(t_j) \overline{G}(t_j)\right]}^{\delta_j z_j}
      {\left[p_j g(t_j) S_j(t_j)\right]}^{(1 - \delta_j) z_j}
      {\left[(1 - p_j) g(t_j)\right]}^{1 - z_j}.
\end{align*}

From the derived likelihood contributions for Case 1--3, we write down the
likelihood functions under the complete data as follows:
\begin{align*}
  & \prod_{j \in I_1} q_j p_j f_j(t_j)\overline{G}(t_j)
  \prod_{j \in I_2} q_j g(t) {\left[p_j S_j(t_j)\right]}^{z_j}
  {\left[(1 - p_j) \right]}^{1 - z_j}\\
  & \prod_{j \in I_3} (1 - q_j)
    {\left[p_j f_j(t_j) \overline{G}(t_j)\right]}^{\delta_j z_j}
    {\left[p_j g(t_j) S_j(t_j)\right]}^{(1 - \delta_j) z_j}
    {\left[(1 - p_j) g(t_j)\right]}^{1 - z_j}.
\end{align*}
In addition, the missing at random assumption of the $\{\xi_j\}$ suggests that
the likelihood contributions of $\{\xi_j\}$ do not involve $\{\delta_j\}$ or
$\{z_j\}$, and thus can be treated as nuisance components that play no role in
estimating the parameters of interest.  As a result, the derived EM algorithm
for those parameters of interest in next section do not depend on the ${q_j}$ at
all.

Therefore, the simplified complete-data likelihood function and the
corresponding observed-data likelihood function are, respectively, as follows:
\begin{align*}
  & \prod_{j \in I_1} p_j f_j(t_j)\overline{G}(t_j)
  \prod_{j \in I_2} g(t) {\left[p_j S_j(t_j)\right]}^{z_j}
  {\left[(1 - p_j) \right]}^{1 - z_j}\\
  & \prod_{j \in I_3}
    {\left[p_j f_j(t_j) \overline{G}(t_j)\right]}^{\delta_j z_j}
    {\left[p_j g(t_j) S_j(t_j)\right]}^{(1 - \delta_j) z_j}
    {\left[(1 - p_j) g(t_j)\right]}^{1 - z_j},
\end{align*}
and
\begin{align*}
  & \prod_{j \in I_1} p_j f_j(t_j)\overline{G}(t_j)
  \prod_{j \in I_2} g(t) {\left[p_j S_j(t_j) + (1 - p_j)\right]}\\
  & \prod_{j \in I_3}
    {\left[p_j f_j(t_j) \overline{G}(t_j) + p_j g(t_j) S_j(t_j)
    (1 - p_j) g(t_j)\right]}.
\end{align*}

\section{Derivation of Model Estimation Procedure}%
\label{sec:deriv-em}

\subsection{Derivation of the E-step}\label{sec:deriv-EM:E}

There is no missing indicators for subjects in Case~1.
So we only need to derive the E-step for subjects in Case~2 and Case~3,
respectively.
For subjects in Case~2, the log-likelihood function under the complete data
$\bm{Y}_2$ is
\begin{align*}
  & \ell(\bm{\theta} \mid \bm{Y}_2)\\
  = & \sum_{j \in I_2} \log{g(t_j)} +
    z_j \left(\log{p_j} + \log{S_j(t_j\mid Z_j = 1)}\right) +
      (1 - z_j) \log{(1 - p_j)}\\
  = & \sum_{j \in I_2}
      z_j \left[(\bx_j^{\top}\bm{\gamma} + \gamma_0) -
      H_0(t_j\mid Z_j = 1)\exp(x_j^{\top} \bb)\right]
      - \log\left[1 + \exp(\bx_j^{\top}\bm{\gamma} + \gamma_0)\right]\\
  & + \log h_c(t_j) - H_c(t_j).
\end{align*}%

Let $\bm{\theta}^{(i)}$ denote the set of parameter estimates at the $i$-th
iteration.  Given the observed data of subject $j$ in Case~2 and
$\bm{\theta}^{(i)}$, by Bayes rule we have
\begin{align*}
  \E(Z_j \mid T_j = t_j, \Delta_j = 0, \xi_j = 1, \bm{\theta}^{(i)})
  & = \Pr(Z_j = 1 \mid T_j = t_j, \Delta_j = 0, \xi_j = 1, \bm{\theta}^{(i)})\\
  & = \frac{\Pr(Z_j = 1, T_j = t_j, \Delta_j = 0, \xi_j = 1 \mid \bm{\theta}^{(i)})}
    {\sum_{z_j \in \{0, 1\}}
    p(Z_j = z_j, T_j = t_j, \Delta_j = 0, \xi_j = 1 \mid \bm{\theta}^{(i)})}\\
  & = \frac{p_j^{(i)} S_j^{(i)}}{p_j^{(i)} S_j^{(i)} + (1 - p_j^{(i)})}.
\end{align*}%
where $p_j^{(i)}$ and $S_j^{(i)}$ are the estimate of $p_j$ and
$S_j(t_j\mid Z_j = 1)$ at the $i$-th iteration, respectively.
Let $\bm{Y}^{O}_{2}$ denote the observed data for subjects in Case~2.
The E-step for subjects in Case~2 is as follows:
\begin{align*}
  & \E \ell(\bm{\theta} \mid \bm{Y}^{O}_{2}, \bm{\theta}^{(i)})\\
  = & \sum_{j \in I_2}
      v_j^{(i)} \left[(\bx_j^{\top}\gamma + \gamma_0) -
      H_0(t_j\mid Z_j = 1)\exp(x_j^{\top} \bb)\right] - \log\left[
    1 + \exp(\bx_j^{\top}\gamma + \gamma_0)\right] \nonumber \\
  & + \log h_c(t_j) - H_c(t_j),
\end{align*}%
where
$v_j^{(i)} = p_j^{(i)} S_j^{(i)} / [p_j^{(i)} S_j^{(i)} + (1 - p_j^{(i)})]$.

Similarly, for subjects in Case~3, the log-likelihood function under the
complete data is
\begin{align*}
  & \ell(\bm{\theta} \mid \bm{Y}_3)\\
  = & \sum_{j \in I_3} \delta_j z_j \left[\log p_j
      + \log f_j(t_j\mid Z_j = 1) + \log \overline{G}(t_j)\right]\\
    & + (1 - \delta_j) z_j \left[\log p_j +
      \log g(t_j) + \log S_j(t_j\mid Z_j = 1)\right]
      + (1 - z_j) \left[\log (1 - p_j) + \log g(t_j)\right].
\end{align*}%
Define $m_{j,k}^{(i)}$, $k\in\{1,2,3\}$, as follows:
\begin{align*}
  m_{j,1}^{(i)} :=
  p(\Delta_j = 1, Z_j = 1, T_j = t_j,\xi_j = 0 \mid \bm{\theta}^{(i)})
    & = q_j^{(i)} p_j^{(i)} h_j^{(i)} S_j^{(i)} \overline{G}_j^{(i)},\\
  m_{j,2}^{(i)} :=
  p(\Delta_j = 0, Z_j = 1, T_j = t_j,\xi_j = 0 \mid \bm{\theta}^{(i)})
    & = q_j^{(i)} p_j^{(i)} S_j^{(i)} g_j^{(i)},\\
  m_{j,3}^{(i)} :=
  p(\Delta_j = 0, Z_j = 0, T_j = t_j,\xi_j = 0 \mid \bm{\theta}^{(i)})
    & = q_j^{(i)} (1 - p_j^{(i)}) g_j^{(i)},
\end{align*}
where $h_j^{(i)}$, $g_j^{(i)}$, and $\overline{G}_j^{(i)}$ are the
estimate of $h_j(t_j\mid Z_j = 1)$, $g(t_j)$, and $\overline{G}_j(t_j)$ at the
$i$-th iteration, respectively.
Let $m_{j}^{(i)} = \sum_{k=1}^{3} m_{j,k}^{(i)}$ and define
$w_{j,k}^{(i)}$, $k\in\{1,2,3\}$, as follows:
\begin{align*}
  w_{j,1}^{(i)}
  & := 
    \E(\Delta_j Z_j \mid T_j = t_j, \xi_j =0, \bm{\theta}^{(i)})
    = m_{j,1}^{(i)} / m_{j}^{(i)},\\
  w_{j,2}^{(i)}
  & : =
    \E \left[(1 - \Delta_j) Z_j \mid T_j = t_j, \xi_j = 0, \bm{\theta}^{(i)}\right]
    = m_{j,2}^{(i)} / m_{j}^{(i)},\\
  w_{j,3}^{(i)}
  & :=
    \E (1 - Z_j \mid T_j = t_j, \xi_j = 0, \bm{\theta}^{(i)})
    = m_{j,3}^{(i)} / m_{j}^{(i)}.
\end{align*}%
The E-step for subjects in Case~3 is then given below.
\begin{align*}
  & \E \ell(\bm{\theta} \mid \bm{Y}^{O}_3, \bm{\theta}^{(i)})\\
  = & \sum_{j \in I_3} w_{j,1}^{(i)} \left[\log h_0(t_j\mid Z_j=1) + \bx^{\top}_j \bb
      \right] + \left(w_{j,1}^{(i)} + w_{j,2}^{(i)}\right)
    \left[ \bx_j^{\top} \gamma + \gamma_0 - H_0(t\mid Z_j=1)
    \exp(\bx_j^{\top}\bb) \right]\\
  & - \log\left[1 + \exp(\bx_j^{\top} \gamma + \gamma_0)\right] +
    \left(w_{j,2}^{(i)} + w_{j,3}^{(i)}\right)\log h_c(t_j) - H_c(t_j).
\end{align*}

\subsection{Derivation of the M-step}\label{sec:deriv-EM:M}

The conditional expectation of the complete-data log-likelihood given the
observed data and estimates from last step can be decomposed into several parts
that involve exclusive sets of parameters.

First of all, the part involving $h_c(\cdot)$ is
\begin{align}\label{eqn:hclike}
  \E \ell(h_c(\cdot) \mid \bm{Y}^{O}, \bth^{(i)}) =
  \sum_{j=1}^n -H_c(t_j) + \sum_{j\in I_2} \log h_c(t_j) +
  \sum_{j\in I_3} (w_{j,2}^{(i)} + w_{j,3}^{(i)}) \log h_c(t_j).
\end{align}
Because $H_c(t)$ is a non-decreasing function, $h_c(t)$
maximizing~\eqref{eqn:hclike} is a discrete function that takes non-zero
values at $t_j$, $j\in I_2 \cup I_3$.  Let $Y_j(t) = \bm{1}(t \le t_j)$ denote
the at-risk indicator. Define that
$C_{j,2}(t) = \bm{1}(t_j \le t) \bm{1}(j \in I_2)$,
$C_{j,3}(t; \bth^{(i)}) = (w_{j,2}^{(i)} + w_{j,3}^{(i)}) \bm{1}(t_j\le
t)\bm{1}(j\in I_3)$, and
$C(t; \bth^{(i)}) = \sum_{j=1}^n C_{j,2}(t) + C_{j,3}(t; \bth^{(i)}) =
\sum_{j\in I_2} C_{j,2}(t) + \sum_{j\in I_3} C_{j,3}(t; \bth^{(i)})$,
and let $\dif C(t; \bth^{(i)}) = C(t; \bth^{(i)}) - C(t-0; \bth^{(i)})$
denote the jump size of $C(t; \bth^{(i)})$.
Then we may rewrite~\eqref{eqn:hclike} to allow ties in $t_j$,
$j\in I_2 \cup I_3$ as
\begin{align}\label{eqn:hclike-tie}
  \E \ell(h_c(\cdot) \mid \bm{Y}^{O}, \bth^{(i)}) =
  \sum_{t \in \mathcal{T}_2 \cup \mathcal{T}_3}
  \left[ \dif C(t; \bth^{(i)}) \log h_c(t) -
  h_c(t) \sum_{j=1}^n Y_j(t)\right],
\end{align}
where $\mathcal{T}_k = \{t_j \mid j \in I_k\}$.
Maximizing~\eqref{eqn:hclike-tie} gives that
\begin{align*}
  h_c^{(i+1)}(t) = \frac{\dif C(t; \bth^{(i)})}{\sum_{j=1}^n Y_j(t)}.
\end{align*}%

Secondly, the part involving both $h_0(\cdot\mid Z_j=1)$ and $\bb$ is
\begin{align}
  & \E \ell(\bb, h_0(\cdot\mid Z_j=1) \mid \bm{Y}^{O}, \bth^{(i)})
    \nonumber\\
  = & ~ \sum_{j\in I_1} \bx_{j}^{\top}\bb +
      \log h_0(t_j\mid Z_j=1) - H_0(t_j\mid Z_j=1) \exp(\bx_{j}^{\top} \bb)
      \nonumber\\
  & + \sum_{j\in I_2} v_j^{(i)}
    \left[- H_0(t_j\mid Z_j=1) \exp(\bx_{j}^{\top} \bb)\right]
    \nonumber \\
  & + \sum_{j\in I_3} w_{j,1}^{(i)} \left[\bx_{j}^{\top} \bb +
    \log h_0(t_j\mid Z_j=1)\right] +
    \left(w_{j,1}^{(i)} + w_{j,2}^{(i)}\right)
    \left[- H_0(t_j\mid Z_j=1) \exp(\bx_{j}^{\top} \bb)\right]\label{eqn:h0like}.
\end{align}
Similarly, $h_0(t\mid Z_j=1)$ maximizing~\eqref{eqn:h0like} is a discrete
function that takes non-zero values at $t_j$, $j\in I_1 \cup I_3$.  Define that
$M_j^{(i)} = \bm{1}(j \in I_1) + v_j^{(i)}\bm{1}(j\in I_2) + (w_{j,1}^{(i)} +
w_{j,2}^{(i)}) \bm{1}(j \in I_3)$,
$N_j^{(i)} = \bm{1}(j \in I_1) + w_{j,1}^{(i)}\bm{1}(j\in I_3)$,
$N_j(t; \bth^{(i)}) = \bm{1}(t_j\le t) N_j^{(i)}$, and
$N(t; \bth^{(i)}) = \sum_{j=1}^n N_j(t; \bth^{(i)})$.  Let
$\dif N(t; \bth^{(i)}) = N(t; \bth^{(i)}) - N(t-0; \bth^{(i)})$
denote the jump size of $N(t; \bth^{(i)})$.  Then we may
rewrite~\eqref{eqn:h0like} to allow ties in $t_j$, $j\in I_1 \cup I_3$ as
\begin{align}\label{eqn:h0like-tie}
  & \E \ell(\bb, h_0(\cdot\mid Z_j=1) \mid \bm{Y}^{O}, \bth^{(i)})
    \nonumber\\
  = & \sum_{t\in \mathcal{T}_1 \cup \mathcal{T}_3}
      \dif N(t; \bth^{(i)}) \log h_0(t\mid Z_j=1) - h_0(t\mid Z_j=1)
      \left[\sum_{j=1}^n Y_j(t) M_j^{(i)}\exp(\bx_{j}^{\top}\bb) \right]
      \nonumber\\
  & + \sum_{j=1}^n N_{j}^{(i)} \bx_{j}^{\top}\bb.
\end{align}
Maximizing~\eqref{eqn:h0like-tie} with regards to $h_0(t\mid Z_j=1)$ gives
\begin{align}
  h_0^{(i+1)}(t\mid Z_j=1) = \frac{\dif N(t; \bth^{(i)})}{
  \sum_{j=1}^n Y_j(t) M_j^{(i)} \exp(\bx_{j}^{\top} \bb)},\label{eqn:h0est}
\end{align}
which is similar to the ``Breslow estimator'' \citep{breslow1974biometrics} for
the regular Cox model.
Plugging~\eqref{eqn:h0est} into~\eqref{eqn:h0like-tie} gives the profiled
expectation,
\begin{align*}
  \E \ell(\bb, h_0^{(i+1)}(\cdot\mid Z_j=1) \mid \bm{Y}^{O}, \bth^{(i)})
  = \sum_{t\in \mathcal{T}_1 \cup \mathcal{T}_3} \dif N(t; \bth^{(i)})
  \left[\log \dif N(t; \bth^{(i)}) - 1\right] + \ell(\bb \mid \bth^{(i)}),
\end{align*}%
where
\begin{align}\label{eqn:cox:m}
  \ell(\bb \mid \bth^{(i)})
  & = \sum_{j=1}^n \int_0^{\infty} I_j(\bb, t \mid \bth^{(i)}) \dif N_j(t; \bth^{(i)}),\\
  I_j(\bb, t \mid \bth^{(i)})
  & = \bx_j^{\top} \bb -
    \log \left[\sum_{j=1}^n Y_j(t) M_j^{(i)} \exp(\bx_{j}^{\top} \bb)\right].
    \nonumber
\end{align}%
This profiling approach is similar to the partial likelihood of
\citet{cox1975biometrika} except that the susceptible indicators and the
distribution of the censoring time come into play through $M_{j}^{(i)}$ and
$\dif N_{j}(t; \bm{\theta}^{(i)})$.

Thirdly, the part involving $\bg$ and $\gamma_0$ is
\begin{align}\label{eqn:cure:m}
  \E \ell(\bg, \gamma_0 \mid \bm{Y}^{O}, \bth^{(i)}) =
  \sum_{j=1}^n (\bx_j^{\top} \bg + \gamma_0) M_j^{(i)}
  - \log \left[1 + \exp(\bx_j^{\top} \bg + \gamma_0)\right],
\end{align}%
which is similar to the log-likelihood function of a regular logistic regression
model.
For ease of notation, we henceforth
denote~\eqref{eqn:cure:m} by $\ell(\bg, \gamma_0 \mid \bth^{(i)})$.

Without regularization, maximizing~\eqref{eqn:cox:m} with respect to $\bb$ and
maximizing~\eqref{eqn:cure:m} with respect to $\bg$ and
$\gamma_0$ can be done by the Newton-Raphson
algorithm or its variants, in a similar fashion to the Cox model and the
logistic regression model, respectively.
In our implementation, we adopted the monotonic quadratic
approximation algorithm proposed by \citet{bohning1988aism} due to its
monotonic convergence property.

With regularization, the loss function involving $\bb$, $\bg$, and $\gamma_0$
that we aim to minimize are, respectively,
\begin{align}
  J(\bb \mid \bth^{(i)})
  & = -\frac{1}{n} \ell(\bb \mid \bth^{(i)}) +
    P_{1}(\bb; \alpha_1, \lambda_1),\label{eqn:cox:reg}\\
  J(\bg, \gamma_0 \mid \bth^{(i)})
  & = -\frac{1}{n} \ell(\bg, \gamma_0 \mid \bth^{(i)}) +
    P_{2}(\bg; \alpha_2, \lambda_2).\label{eqn:cure:reg}
\end{align}
The M-steps for $h_c(\cdot)$ and $h_0(\cdot\mid Z_j=1)$ remains the same.
We derived the CMD algorithms for~\eqref{eqn:cox:reg} and~\eqref{eqn:cure:reg},
respectively, in the following paragraphs.

The loss function of $\beta_k$, $k\in\{1,\ldots,p\}$ for fixed tuning parameter
$\alpha_1$, $\lambda_1$, and fixed $\beta_l = \tilde{\beta}_l$, $l \neq k$, is
\begin{align}
  J_k(\beta_k\mid \bth^{(i)}) =
  - \frac{1}{n} \ell(\beta_k \mid \beta_{l} = \tilde{\beta}_{l},~
  l \neq k,~\bth^{(i)}) + \lambda_1 \left(\alpha_1 \omega_k \lvert \beta_k \rvert
  + \frac{1 - \alpha_1}{2} \beta_k^2\right),\label{eq:loss-beta-k}
\end{align}
which is an univariate function.  However, there is no a closed-form solution to
minimize~\eqref{eq:loss-beta-k} efficiently.
Thus, we derived an update step of $\beta_k$ to decrease~\eqref{eq:loss-beta-k}
by following the CMD algorithm.  Let
$\dot{\ell}_k(\bb \mid \bth^{(i)})$ and $\ddot{\ell}_k(\bb \mid \bth^{(i)})$
denote the first and second partial derivative of $\ell(\bb \mid \bth^{(i)})$
with respect to $\beta_k$, respectively.  We have
\begin{align*}
  \dot{\ell}_k(\bb \mid \bth^{(i)})
  & = \sum_{j=1}^n \int_0^{\infty}
    \pdv{I_j(\bb, t \mid \bth^{(i)})}{\beta_k} \dif N_j(t; \bth^{(i)}),\\
  \pdv{I_j(\bb, t \mid \bth^{(i)})}{\beta_k}
  & = x_{j,k} - \frac{\sum_{j=1}^n x_{j,k} Y_j(t) M_j^{(i)} \exp(\bx_{j}^{\top} \bb)}
    {\sum_{j=1}^n Y_j(t) M_j^{(i)} \exp(\bx_{j}^{\top} \bb)},
\end{align*}%
and
\begin{align*}
  \ddot{\ell}_k(\bb \mid \bth^{(i)})
  & = \sum_{j=1}^n \int_0^{\infty}
    \pdv[2]{I_j(\bb, t \mid \bth^{(i)})}{\beta_k} \dif N_j(t; \bth^{(i)}),
    \nonumber\\
  \pdv[2]{I_j(\bb, t \mid \bth^{(i)})}{\beta_k}
  & = - \frac{\sum_{j\in \mathcal{R}(t)}
    x_{j,k}^2 M_j^{(i)} \exp(\bx_{j}^{\top} \bb)}
    {\sum_{j\in \mathcal{R}(t)}  M_j^{(i)} \exp(\bx_{j}^{\top} \bb)} +
    {\left[\frac{\sum_{j\in \mathcal{R}(t)} x_{j,k} M_j^{(i)} \exp(\bx_{j}^{\top} \bb)}
    {\sum_{j\in \mathcal{R}(t)} M_j^{(i)} \exp(\bx_{j}^{\top}
    \bb)}\right]}^{2},
\end{align*}
where $\mathcal{R}(t) = \{j\mid Y_j(t) = 1\}$ is the index set of subjects in
the risk-set at time $t$.  Define
\begin{align*}
  D_k(\bth^{(i)}) = \frac{1}{4n} \sum_{j=1}^n \int_0^{\infty}
  {\left(\max_{j\in \mathcal{R}(t)} x_{j,k}
  - \min_{j\in \mathcal{R}(t)}x_{j,k}\right)}^2
  \dif N_j(t; \bth^{(i)}).
\end{align*}%
One may verify that the variance of a discrete random variable $X$ with
$\Pr(X = x_{j,k}) = M_j^{(i)} \exp(\bx_{j}^{\top} \bb)/ \sum_{j=1}^n M_j^{(i)}
\exp(\bx_{j}^{\top} \bb)$ is
$- {\partial^2 I(\bb, t \mid \bth^{(i)})}/{\partial \beta_k^2}$.  The variance
will be maximized to
${(\max_{j\in \mathcal{R}(t)} x_{j,k} -
  \min_{j\in \mathcal{R}(t)}x_{j,k})}^2/4$,
if the probability mass of $X$ is equally
distributed on $\max_{j\in \mathcal{R}(t)} x_{j,k}$
and $\min_{j\in \mathcal{R}(t)} x_{j,k}$, which suggests that
\begin{align}\label{eqn:inq:cox}
  -\frac{1}{n}\ddot{\ell}(\bb \mid \bth^{(i)}) \le D_k(\bth^{(i)}),~\forall
  \bb\in \mathbb{R}^p.
\end{align}%
Then we approximated~\eqref{eq:loss-beta-k} with
\begin{align}
  q(\beta_k \mid \tilde{\bb}^{(i)}_{k}, \bth^{(i)}) =
  & - \frac{1}{n} \left[
    \ell(\tilde{\bb}^{(i)}_{k} \mid \bth^{(i)}) +
    \dot{\ell}_k(\tilde{\bb}^{(i)}_{k} \mid
    \bth^{(i)})(\beta_k - {\beta}^{(i)}_k)
    \right]
    + \frac{D_k}{2}{(\beta_k - {\beta}^{(i)}_k)}^{2}\nonumber\\
  & + \lambda_1 \left(\alpha_1 \omega_k \lvert \beta_k \rvert
    + \frac{1 - \alpha_1}{2} \beta_k^2\right),\label{eq:quad-beta-k}
\end{align}
where $\tilde{\bb}^{(i)}_{k} =
{(\tilde{\beta}_1, \ldots, \beta_k^{(i)}, \ldots, \tilde{\beta}_{p})}^{\top}$.
The minimizer of~\eqref{eq:quad-beta-k} can be easily found by the
soft-thresholding rule.  The update of $\tilde{\beta}_k$ is
thus
\begin{align}
  \beta^{(i+1)}_k = \frac{s(D_k(\bth^{(i)})
  {\beta}^{(i)}_k + \frac{1}{n} \dot{\ell}_k(\tilde{\bb}^{(i)}_{k} \mid \bth^{(i)}),
  \lambda_1 \alpha_1 \omega_k)}{D_k(\bth^{(i)}) + \lambda_1(1 - \alpha_1)},
  \label{eq:update-beta-k}
\end{align}
where $s(a, b) = {(\lvert a \rvert - b)}_{+} \mathrm{sign}(a)$ is the
soft-thresholding operator~\citep{donoho1994biometrika}.

\begin{lemma}\label{lem:descent-beta}
  Define the objective function $J(\bb)$ for fixed $\lambda_1$, $\alpha_1$, and
  $\beta_l = \tilde{\beta}_l$, $l \neq k$, to be $J_k(\beta_k)$.  Let
  $\beta_k^{(i)}$ and $\beta_k^{(i+1)}$ denote $\beta_k$ before
  and after the update step in Algorithm 2, respectively,
  $k\in\{1,\ldots,p\}$.  Then
  $J_k(\beta_k^{(i+1)}) \le J_k(\beta_k^{(i)})$.
\end{lemma}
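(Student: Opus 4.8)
The plan is to prove the lemma by the majorization--minimization (MM) principle: treat the quadratic-plus-penalty surrogate $q(\,\cdot \mid \tilde{\bb}^{(i)}_{k}, \bth^{(i)})$ from~\eqref{eq:quad-beta-k} as a majorizer of the univariate objective $J_k$ defined in~\eqref{eq:loss-beta-k} at the current iterate $\beta_k^{(i)}$, and then use that the update~\eqref{eq:update-beta-k} minimizes this surrogate exactly. Concretely, I would establish the chain
\begin{align*}
  J_k(\beta_k^{(i+1)}) \le q(\beta_k^{(i+1)} \mid \tilde{\bb}^{(i)}_{k}, \bth^{(i)}) \le q(\beta_k^{(i)} \mid \tilde{\bb}^{(i)}_{k}, \bth^{(i)}) = J_k(\beta_k^{(i)}),
\end{align*}
whose three links correspond to majorization, surrogate-minimization, and tangency, respectively.

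First I would verify tangency (the rightmost equality). Substituting $\beta_k = \beta_k^{(i)}$ into~\eqref{eq:quad-beta-k} annihilates both the linear term $\dot{\ell}_k(\tilde{\bb}^{(i)}_{k} \mid \bth^{(i)})(\beta_k - \beta_k^{(i)})$ and the quadratic term $\frac{D_k}{2}(\beta_k - \beta_k^{(i)})^2$, leaving $-\frac{1}{n}\ell(\tilde{\bb}^{(i)}_{k} \mid \bth^{(i)})$ together with the elastic-net penalty evaluated at $\beta_k^{(i)}$. Since $\tilde{\bb}^{(i)}_{k}$ is precisely $\bb$ with its $k$-th coordinate set to $\beta_k^{(i)}$ and the remaining entries frozen at $\tilde\beta_l$, this agrees term-for-term with $J_k(\beta_k^{(i)})$.

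Next I would prove the majorization $J_k(\beta_k) \le q(\beta_k \mid \tilde{\bb}^{(i)}_{k}, \bth^{(i)})$ for every $\beta_k$ (the leftmost inequality). Because the elastic-net penalty enters $J_k$ and $q$ identically, it cancels, and it suffices to dominate the smooth piece $-\frac{1}{n}\ell$ viewed as a function of the single coordinate $\beta_k$. A second-order Taylor expansion of this univariate restriction about $\beta_k^{(i)}$ with Lagrange remainder yields a remainder proportional to $-\frac{1}{n}\ddot{\ell}_k$ at some intermediate point; the curvature bound~\eqref{eqn:inq:cox}, namely $-\frac{1}{n}\ddot{\ell}_k(\bb \mid \bth^{(i)}) \le D_k(\bth^{(i)})$ for all $\bb$, then lets me replace that second derivative by $D_k$ and conclude
\begin{align*}
  -\frac{1}{n}\ell(\beta_k \mid \beta_l = \tilde\beta_l,\, l\neq k,\, \bth^{(i)}) \le -\frac{1}{n}\left[\ell(\tilde{\bb}^{(i)}_{k} \mid \bth^{(i)}) + \dot{\ell}_k(\tilde{\bb}^{(i)}_{k} \mid \bth^{(i)})(\beta_k - \beta_k^{(i)})\right] + \frac{D_k}{2}(\beta_k - \beta_k^{(i)})^2.
\end{align*}
Adding back the common penalty gives the majorization. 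The middle inequality is then automatic: the update~\eqref{eq:update-beta-k} is exactly the soft-thresholding minimizer of the univariate quadratic-plus-$\ell_1$ surrogate $q$, so $q(\beta_k^{(i+1)} \mid \tilde{\bb}^{(i)}_{k}, \bth^{(i)}) \le q(\beta_k^{(i)} \mid \tilde{\bb}^{(i)}_{k}, \bth^{(i)})$.

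The single substantive ingredient is the uniform curvature bound~\eqref{eqn:inq:cox}; everything else is bookkeeping. Since that bound is already derived in the preceding paragraph via the variance interpretation of $-\partial^2 I_j/\partial\beta_k^2$ together with the elementary fact that a random variable supported on $[\min_{j\in\mathcal{R}(t)} x_{j,k},\, \max_{j\in\mathcal{R}(t)} x_{j,k}]$ has variance at most one quarter of its squared range, I may invoke it directly, and no further obstacle remains. If anything, the only point demanding a word of care is that the Taylor bound must hold for \emph{all} $\beta_k$ rather than merely near $\beta_k^{(i)}$, which is exactly why the global nature of~\eqref{eqn:inq:cox} is needed.
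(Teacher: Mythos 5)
Your proposal is correct and follows essentially the same argument as the paper's proof: both establish the chain $J_k(\beta_k^{(i+1)}) \le q(\beta_k^{(i+1)} \mid \tilde{\bb}^{(i)}_{k}, \bth^{(i)}) \le q(\beta_k^{(i)} \mid \tilde{\bb}^{(i)}_{k}, \bth^{(i)}) = J_k(\beta_k^{(i)})$ via a Taylor expansion with Lagrange remainder, the global curvature bound~\eqref{eqn:inq:cox}, tangency at $\beta_k^{(i)}$, and the exact soft-thresholding minimization of the surrogate. Your closing remark about needing the bound to hold for all $\bb$ (not just locally) is precisely the role that~\eqref{eqn:inq:cox} plays in the paper's proof as well.
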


\begin{proof}
  $\forall k \in \{1,\ldots,p\}$, by the Taylor theorem, we have
  \begin{align*}
    & \ell(\beta_k\mid \beta_l = \tilde{\beta}_l,l\neq k, \bth^{(i)}) \\
    =~
    & \ell(\tilde{\bb}^{(i)}_{k} \mid \bth^{(i)}) +
      \dot{\ell}_k(\tilde{\bb}^{(i)}_{k} \mid \bth^{(i)})
      (\beta_k - {\beta}^{(i)}_k) +
      \frac{1}{2}\ddot{\ell}_k(\tilde{\bm{\xi}}_k \mid \bth^{(i)})
      {(\beta_k - {\beta}^{(i)}_k)}^2,
  \end{align*}
  where $\tilde{\bm{\xi}}_{k} = {(\tilde{\beta}_1, \ldots, \xi_k,
    \ldots, \tilde{\beta}_{p})}^{\top}$,
  $\xi_k$ is some real number between $\beta_k$ and $\beta_k^{(i)}$.
  From~\eqref{eqn:inq:cox}, we further have that
  \begin{align*}
    & - \frac{1}{n}\ell(\beta_k\mid \beta_l = \tilde{\beta}_l,l\neq k,
      \bth^{(i)}) \\
    =
    & - \frac{1}{n} \left[
      \ell(\tilde{\bb}^{(i)}_{k} \mid \bth^{(i)}) +
      \dot{\ell}_k(\tilde{\bb}^{(i)}_{k} \mid \bth^{(i)})
      (\beta_k - {\beta}^{(i)}_k)
      \right] -
      \frac{1}{2n}\ddot{\ell}_k(\tilde{\bm{\xi}}_k \mid \bth^{(i)})
      {(\beta_k - {\beta}^{(i)}_k)}^2 \\
    \le
    & - \frac{1}{n}\left[
      \ell(\tilde{\bb}^{(i)}_{k} \mid \bth^{(i)}) +
      \dot{\ell}_k(\tilde{\bb}^{(i)}_{k} \mid \bth^{(i)})
      (\beta_k - {\beta}^{(i)}_k)
      \right] +
      \frac{D_k(\bth^{(i)})}{2}{(\beta_k - {\beta}^{(i)}_k)}^{2},
  \end{align*}
  which suggests that
  $J_k(\beta_k \mid \bth^{(i)}) \le q(\beta_k \mid \tilde{\bb}^{(i)}_{k},
  \bth^{(i)})$, $\forall \beta_k \in \mathbb{R}$.  Notice that
  $J_k({\beta}^{(i)}_k \mid \bth^{(i)}) = q({\beta}^{(i)}_k \mid
  \tilde{\bb}^{(i)}_{k}, \bth^{(i)})$.  So
  $q(\beta_k \mid \tilde{\bb}^{(i)}_{k}, \bth^{(i)})$ majorizes
  $J_k(\beta_k\mid \bth^{(i)})$ at the point ${\beta}^{(i)}_k$.
  For $\beta_k^{(i+1)}$ given in~\eqref{eq:update-beta-k}, we have
  $q({\beta}^{(i+1)}_k \mid \tilde{\bb}^{(i)}_{k}, \bth^{(i)}) \le
    q({\beta}^{(i)}_k \mid \tilde{\bb}^{(i)}_{k}, \bth^{(i)})$.
  Therefore, we can conclude that
  \begin{align*}
    J_k({\beta}^{(i+1)}_k \mid \bth^{(i)}) \le
    q({\beta}^{(i+1)}_k \mid \tilde{\bb}^{(i)}_{k}, \bth^{(i)}) \le
    q({\beta}^{(i)}_k \mid \tilde{\bb}^{(i)}_{k}, \bth^{(i)}) =
    J_k({\beta}^{(i)}_k \mid \bth^{(i)}),
  \end{align*}
  which ensures the monotonic descending of~\eqref{eq:loss-beta-k}.
\end{proof}

We similarly derived the procedure minimizing~\eqref{eqn:cure:reg} based on
the CMD algorithm.
For fixed tuning parameters $\alpha_2$, $\lambda_2$, and
$\gamma_l = \tilde{\gamma}_l$, $l\neq k$, the loss function of $\gamma_k$,
$k\in\{0,\ldots,p\}$ is
\begin{align}
  J_k(\gamma_k \mid \bth^{(i)}) =
  -\frac{1}{n} \ell(\gamma_k \mid
  \gamma_l = \tilde{\gamma}_l,~l\neq k,~\bth^{(i)}) +
  \lambda_2 \bm{1}(k\neq 0) \left( \alpha_2 \nu_k \lvert \gamma_k \rvert +
  \frac{1 - \alpha_2}{2} \gamma_k^2\right).\label{eq:loss-gamma-k}
\end{align}
Let $\dot{\ell}_k(\bg, \gamma_0 \mid \bth^{(i)})$ and
$\ddot{\ell}_k(\bg, \gamma_0 \mid \bth^{(i)})$ denote the first and second
partial derivative of $\ell(\bg \mid \bth^{(i)})$ with respect to
$\gamma_k$, respectively.  We have
\begin{align*}
  \dot{\ell}_0(\bg, \gamma_0 \mid \bth^{(i)})
  & = \sum_{j=1}^{n} M_j^{(i)} - p_j,~
    \ddot{\ell}_0(\bg, \gamma_0 \mid \bth^{(i)})
  = - \sum_{j=1}^{n} p_j (1 - p_j),\\
  \dot{\ell}_k(\bg, \gamma_0 \mid \bth^{(i)})
  & = \sum_{j=1}^{n} (M_j^{(i)} - p_j) \ox_{j,k},~
  \ddot{\ell}_k(\bg, \gamma_0 \mid \bth^{(i)})
  = - \sum_{j=1}^{n} p_j (1 - p_j) \ox_{j,k}^2,~k\neq 0.
\end{align*}
Notice that $p_j(1 - p_j) \le 1 / 4$, $\forall 0 \le p_j \le 1$.
Thus, we have
\begin{align}
  - \frac{1}{n} \ddot{\ell}_k(\bg, \gamma_0 \mid \bth^{(i)}) \le B_k =
  \begin{cases}
    1/4 & k = 0\\
    \sum_{j=1}^{n} \ox_{j,k}^2/(4n) & k\neq 0
  \end{cases},\label{eq:inq:logis}
\end{align}
where $B_k$ does not depend on $\bth^{(i)}$ and thus needs computing only once.
We similarly considered a quadratic approximation of~\eqref{eq:loss-gamma-k} as
follows:
\begin{align}
  q(\gamma_k \mid \tilde{\gamma}_k,\bth^{(i)}) =
  & - \frac{1}{n}
    \left[\ell(\tilde{\gamma}_k \mid \bth^{(i)}) +
    \dot{\ell}_k(\bg, \gamma_0 \mid \bth^{(i)})(\gamma_k - \gamma_k^{(i)}) \right]
    \nonumber\\
  & + B_k {(\gamma_k - \gamma_k^{(i)})}^{2} +
    \bm{1}(k\neq 0)\lambda_2 \left( \alpha_2 \nu_k \lvert \gamma_k \rvert +
    \frac{1 - \alpha_2}{2} \gamma_k^2 \right),\label{eq:quad-gamma-k}
\end{align}
where
$\tilde{\gamma}_k^{(i)} = {(\tilde{\gamma}_0, \ldots, \gamma_k^{(i)}, \ldots,
  \tilde{\gamma}_{p})}^{\top}$.
By soft-thresholding rule, the update step that
minimizes~\eqref{eq:quad-gamma-k} is
\begin{align}
  \gamma^{(i+1)}_k = \frac{s(B_k {\gamma}^{(i)}_k +
  \frac{1}{n} \dot{\ell}_k(\tilde{\bg}^{(i)}_{k} \mid \bth^{(i)}),
  \lambda_2\alpha_2 \nu_k \bm{1}(k\neq 0))}
  {B_k + \lambda_2(1 - \alpha_2)\bm{1}(k\neq 0)},
  \label{eq:update-gamma-k}
\end{align}
where $s(a, b) = {(\lvert a \rvert - b)}_{+} \mathrm{sign}(a)$ is again the
soft-thresholding operator.

\begin{lemma}\label{lem:descent-gamma}
  Define the objective function $J(\bg, \gamma_0)$ for fixed $\lambda_2$,
  $\alpha_2$, and $\gamma_l = \tilde{\gamma}_l$, $l \neq k$, to be
  $J_k(\gamma_k)$.  Let $\gamma_k^{(i)}$ and $\gamma_k^{(i+1)}$ 
  denote $\gamma_k$ before and after the update step in
  Algorithm 3, respectively, $k\in\{0,\ldots,p\}$.  Then
  $J_k(\gamma_k^{(i+1)}) \le J_k(\gamma_k^{(i)})$.
\end{lemma}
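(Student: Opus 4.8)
The plan is to reproduce, for the logistic (incidence) component, the majorization--minimization argument already used for Lemma~\ref{lem:descent-beta}. Fix an index $k\in\{0,\ldots,p\}$ and hold $\gamma_l=\tilde{\gamma}_l$ for all $l\neq k$, so that $J_k(\gamma_k\mid\bth^{(i)})$ in~\eqref{eq:loss-gamma-k} is a univariate function of $\gamma_k$ whose non-smooth part is the (possibly absent) elastic-net penalty. First I would apply Taylor's theorem to the smooth piece $\ell(\gamma_k\mid\gamma_l=\tilde{\gamma}_l,\,l\neq k,\bth^{(i)})$, expanding it to second order about the current value $\gamma_k^{(i)}$ with an exact Lagrange remainder $\tfrac{1}{2}\ddot{\ell}_k(\tilde{\bm{\xi}}_k\mid\bth^{(i)})(\gamma_k-\gamma_k^{(i)})^2$, where $\tilde{\bm{\xi}}_k$ agrees with $\tilde{\bg}^{(i)}_k$ except in its $k$-th entry $\xi_k$, a point between $\gamma_k$ and $\gamma_k^{(i)}$.

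Next I would invoke the uniform curvature bound~\eqref{eq:inq:logis}, namely $-\tfrac{1}{n}\ddot{\ell}_k(\bg,\gamma_0\mid\bth^{(i)})\le B_k$, which follows from $p_j(1-p_j)\le 1/4$ and holds for \emph{every} value of its argument. Substituting this bound for the remainder term shows that the quadratic-plus-penalty function $q(\gamma_k\mid\tilde{\gamma}_k,\bth^{(i)})$ in~\eqref{eq:quad-gamma-k} satisfies $J_k(\gamma_k\mid\bth^{(i)})\le q(\gamma_k\mid\tilde{\gamma}_k,\bth^{(i)})$ for all $\gamma_k\in\mathbb{R}$, with equality at $\gamma_k^{(i)}$; hence $q$ majorizes $J_k$ at the current iterate. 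The update~\eqref{eq:update-gamma-k} is, by the soft-thresholding rule, exactly the global minimizer of $q$, so $q(\gamma_k^{(i+1)}\mid\tilde{\gamma}_k,\bth^{(i)})\le q(\gamma_k^{(i)}\mid\tilde{\gamma}_k,\bth^{(i)})$. Chaining the three relations then gives
\begin{align*}
  J_k(\gamma_k^{(i+1)}\mid\bth^{(i)})
  \le q(\gamma_k^{(i+1)}\mid\tilde{\gamma}_k,\bth^{(i)})
  \le q(\gamma_k^{(i)}\mid\tilde{\gamma}_k,\bth^{(i)})
  = J_k(\gamma_k^{(i)}\mid\bth^{(i)}),
\end{align*}
which is the desired descent inequality.

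I do not expect a genuine obstacle here, since the skeleton is identical to Lemma~\ref{lem:descent-beta} and the only care required is bookkeeping. Specifically, I would treat the intercept $k=0$ and the covariate coefficients $k\neq 0$ uniformly through the indicator $\bm{1}(k\neq 0)$ appearing in~\eqref{eq:loss-gamma-k} and~\eqref{eq:quad-gamma-k}, checking separately that for $k=0$ the penalty drops out and the additive step $\gamma_0\gets\gamma_0+\tfrac{4}{n}\dot{\ell}_0(\bg,\gamma_0)$ in Algorithm~\ref{alg:mstep-gamma} is precisely the unconstrained minimizer of the corresponding quadratic majorizer with $B_0=1/4$. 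I would also note that, in contrast to $D_k(\bth^{(i)})$ in the Cox part, the bound $B_k$ is a constant independent of $\bth^{(i)}$, so the majorizer is valid across all iterations without recomputation; this simplifies the verification but plays no essential role in the descent argument itself.
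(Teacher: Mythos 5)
Your proposal is correct and follows essentially the same route as the paper's proof: a second-order Taylor expansion with Lagrange remainder, the uniform curvature bound~\eqref{eq:inq:logis} to build the majorizer $q$ of~\eqref{eq:quad-gamma-k}, the soft-thresholding update as the exact minimizer of $q$, and the standard MM chaining $J_k(\gamma_k^{(i+1)}) \le q(\gamma_k^{(i+1)}) \le q(\gamma_k^{(i)}) = J_k(\gamma_k^{(i)})$. Your explicit check that the intercept step $\gamma_0 \gets \gamma_0 + \tfrac{4}{n}\dot{\ell}_0$ is the unconstrained minimizer of the majorizer with $B_0 = 1/4$ is a welcome piece of bookkeeping that the paper's proof handles only implicitly by treating all $k \in \{0,\ldots,p\}$ uniformly.
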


\begin{proof}
  $\forall k \in \{0,\ldots,p\}$, by the Taylor theorem, we have
  \begin{align*}
    & \ell(\gamma_k\mid \gamma_l = \tilde{\gamma}_l,l\neq k, \bth^{(i)}) \\
    =~
    & \ell(\tilde{\bg}^{(i)}_{k} \mid \bth^{(i)}) +
      \dot{\ell}_k(\tilde{\bg}^{(i)}_{k} \mid \bth^{(i)})
      (\gamma_k - {\gamma}^{(i)}_k) +
      \frac{1}{2}\ddot{\ell}_k(\tilde{\bm{\xi}}_k \mid \bth^{(i)})
      {(\gamma_k - {\gamma}^{(i)}_k)}^2,
  \end{align*}
  where $\tilde{\bm{\xi}}_{k} = {(\tilde{\gamma}_0, \ldots, \xi_k,
    \ldots, \tilde{\gamma}_{p})}^{\top}$,
  $\xi_k$ is some real number between $\gamma_k$ and $\gamma_k^{(i)}$.
  From~\eqref{eq:inq:logis}, we further have that
  \begin{align*}
    & - \frac{1}{n}\ell(\gamma_k\mid \gamma_l = \tilde{\gamma}_l,l\neq k,
      \bth^{(i)}) \\
    =
    & - \frac{1}{n} \left[
      \ell(\tilde{\bg}^{(i)}_{k} \mid \bth^{(i)}) +
      \dot{\ell}_k(\tilde{\bg}^{(i)}_{k} \mid \bth^{(i)})
      (\gamma_k - {\gamma}^{(i)}_k)
      \right] -
      \frac{1}{2n}\ddot{\ell}_k(\tilde{\bm{\xi}}_k \mid \bth^{(i)})
      {(\gamma_k - {\gamma}^{(i)}_k)}^2 \\
    \le
    & - \frac{1}{n}\left[
      \ell(\tilde{\bg}^{(i)}_{k} \mid \bth^{(i)}) +
      \dot{\ell}_k(\tilde{\bg}^{(i)}_{k} \mid \bth^{(i)})
      (\gamma_k - {\gamma}^{(i)}_k)
      \right] +
      \frac{B_k}{2}{(\gamma_k - {\gamma}^{(i)}_k)}^{2},
  \end{align*}
  which suggests that
  $J_k(\gamma_k \mid \bth^{(i)}) \le q(\gamma_k \mid \tilde{\bg}^{(i)}_{k},
  \bth^{(i)})$, $\forall \gamma_k \in \mathbb{R}$.  Notice that
  $J_k({\gamma}^{(i)}_k \mid \bth^{(i)}) = q({\gamma}^{(i)}_k \mid
  \tilde{\bg}^{(i)}_{k}, \bth^{(i)})$.  So
  $q(\gamma_k \mid \tilde{\bg}^{(i)}_{k}, \bth^{(i)})$ majorizes
  $J_k(\gamma_k\mid \bth^{(i)})$ at the point ${\gamma}^{(i)}_k$.
  For $\gamma_k^{(i+1)}$ given in~\eqref{eq:update-gamma-k}, we have
  $q({\gamma}^{(i+1)}_k \mid \tilde{\bg}^{(i)}_{k}, \bth^{(i)}) \le
    q({\gamma}^{(i)}_k \mid \tilde{\bg}^{(i)}_{k}, \bth^{(i)})$.
  Therefore, we can conclude that
  \begin{align*}
    J({\gamma}^{(i+1)}_k \mid \bth^{(i)}) \le
    q({\gamma}^{(i+1)}_k \mid \tilde{\bg}^{(i)}_{k}, \bth^{(i)}) \le
    q({\gamma}^{(i)}_k \mid \tilde{\bg}^{(i)}_{k}, \bth^{(i)}) =
    J({\gamma}^{(i)}_k \mid \bth^{(i)}),
  \end{align*}
  which ensures the monotonic descending of~\eqref{eq:loss-gamma-k}.
\end{proof}



 \clearpage
\section{Additional Simulation Results}\label{supp:sim}

\subsection{Diagram of Data Generation}
\begin{figure}[htp]
  \centering
  \includegraphics[width=0.8\linewidth]{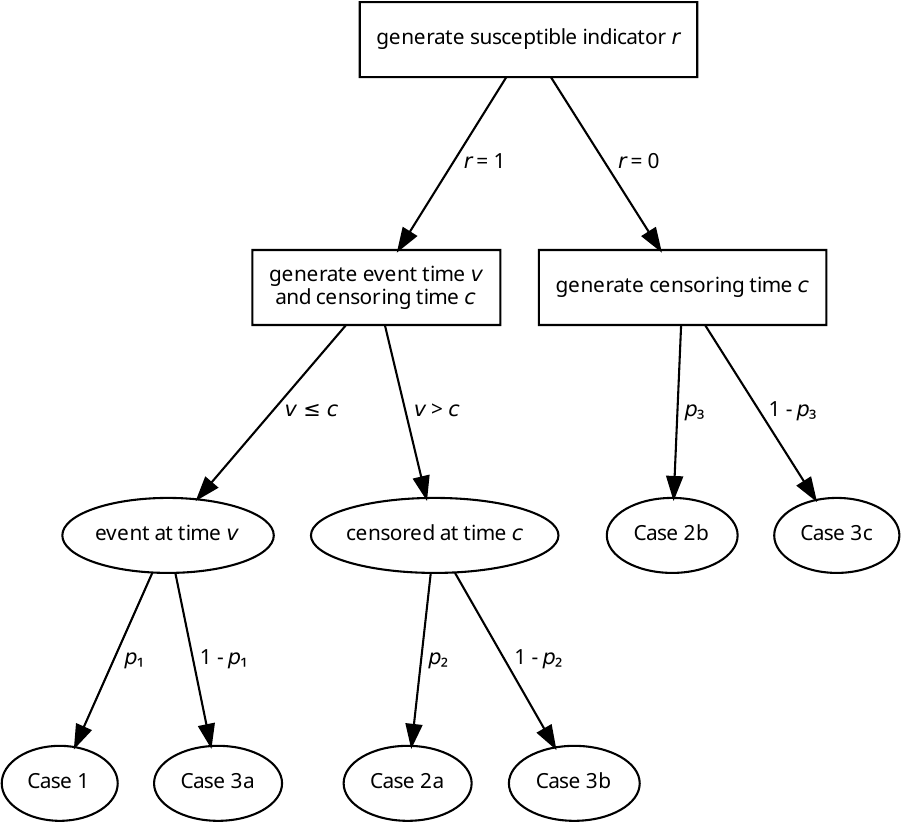}
  \caption[Generation of survival cure data with uncertain events.]
  {Generation of survival data with a cured fraction and uncertain events.}%
  \label{fig:simu:data}
\end{figure}

\clearpage
\subsection{Setting 1}

\begin{table}[h]
  \centering
  \caption{Simulation results of Setting 1: mean area under curve (AUC) in percentage (with standard deviation given in
    parenthesis) of identification of actual events for subjects with
    uncertain events (in Case~3).}%
  \label{tab:simu:mcer}
  \small
  \begin{tabular}{ccccccccc}
    \toprule
           & \multicolumn{8}{c}{Scenario}                                   \\
    \cmidrule(lr){2-9}
           & 1     & 2     & 3     & 4     & 5     & 6     & 7     & 8      \\
    \midrule
    I.Cure & 75.1  & 76.3  & 74.7  & 75.9  & 75.1  & 76.5  & 74.5  & 75.3   \\
           & (4.1) & (6.7) & (4.5) & (5.1) & (5.7) & (5.1) & (4.3) & (17.0) \\
    [1ex]
    Oracle & 88.6  & 89.4  & 87.9  & 88.7  & 88.2  & 88.9  & 87.1  & 88.1   \\
           & (4.5) & (3.1) & (2.9) & (7.2) & (5.4) & (3.2) & (3.6) & (2.2)  \\
    \bottomrule
  \end{tabular}
\end{table}

\begin{table}[htp]
  \centering
  \caption{Simulation results of Setting 1: mean of standard error (SE) estimates measured in scale of $10^{-2}$ for
    the proposed method with the empirical SE given in the parenthesis.}%
  \label{tab:simu:se}
  \small \setlength{\tabcolsep}{5pt}
  \begin{tabular}{cccccccccccc}
    \toprule
             & \multicolumn{5}{c}{Logistic}
             & \multicolumn{5}{c}{Cox}                                                                    \\
    \cmidrule(lr){2-6}\cmidrule(lr){8-12}
    Scenario & $\hat{\mathrm{se}}(\hat{\gamma}_1)$
             & $\hat{\mathrm{se}}(\hat{\gamma}_2)$
             & $\hat{\mathrm{se}}(\hat{\gamma}_3)$
             & $\hat{\mathrm{se}}(\hat{\gamma}_4)$
             & $\hat{\mathrm{se}}(\hat{\gamma}_5)$
             &
             & $\hat{\mathrm{se}}(\hat{\beta}_1)$
             & $\hat{\mathrm{se}}(\hat{\beta}_2)$
             & $\hat{\mathrm{se}}(\hat{\beta}_3)$
             & $\hat{\mathrm{se}}(\hat{\beta}_4)$
             & $\hat{\mathrm{se}}(\hat{\beta}_5)$                  \\
    \midrule
    1        & 41.6   & 39.5   & 42.4   & 32.6   & 35.8
             &        & 12.6   & 12.4   & 13.2   & 14.7   & 15.0   \\
             & (42.4) & (39.8) & (44.4) & (33.7) & (36.3)
             &        & (12.1) & (12.5) & (13.2) & (14.6) & (15.1) \\
    [1ex]
    2        & 27.7   & 26.4   & 27.7   & 22.1   & 23.6
             &        & 13.4   & 13.1   & 14.2   & 15.4   & 15.5   \\
             & (28.7) & (26.7) & (26.8) & (22.2) & (23.6)
             &        & (13.3) & (12.9) & (14.1) & (15.7) & (15.2) \\
    [1ex]
    3        & 36.0   & 33.8   & 37.8   & 28.2   & 31.4
             &        & 11.9   & 11.7   & 12.5   & 13.5   & 13.9   \\
             & (38.2) & (33.9) & (39.5) & (28.6) & (33.1)
             &        & (11.8) & (11.4) & (12.9) & (13.6) & (14.3) \\
    [1ex]
    4        & 23.4   & 22.6   & 24.0   & 19.3   & 20.4
             &        & 12.5   & 12.1   & 13.3   & 13.7   & 14.0   \\
             & (24.7) & (22.5) & (24.3) & (18.8) & (20.5)
             &        & (12.6) & (11.8) & (13.2) & (14.1) & (13.8) \\
    [1ex]
    5        & 39.6   & 37.0   & 40.8   & 31.2   & 34.3
             &        & 12.1   & 11.9   & 12.9   & 13.7   & 13.8   \\
             & (43.2) & (38.2) & (44.6) & (35.0) & (36.0)
             &        & (11.9) & (11.9) & (13.1) & (14.3) & (14.6) \\
    [1ex]
    6        & 25.8   & 24.7   & 26.3   & 21.4   & 22.7
             &        & 13.0   & 12.3   & 13.5   & 13.9   & 14.2   \\
             & (24.9) & (25.5) & (26.2) & (21.7) & (23.1)
             &        & (12.4) & (12.4) & (13.1) & (13.9) & (13.7) \\
    [1ex]
    7        & 31.4   & 29.5   & 33.4   & 25.4   & 27.3
             &        & 11.0   & 10.6   & 11.8   & 12.0   & 12.3   \\
             & (32.8) & (30.2) & (36.4) & (27.0) & (28.1)
             &        & (11.7) & (10.8) & (12.2) & (12.3) & (12.7) \\
    [1ex]
    8        & 20.6   & 19.8   & 21.5   & 17.5   & 18.4
             &        & 11.4   & 10.8   & 12.2   & 11.9   & 12.2   \\
             & (21.1) & (19.9) & (21.2) & (17.3) & (18.6)
             &        & (11.4) & (10.8) & (12.7) & (11.2) & (12.3) \\
    \bottomrule
  \end{tabular}
\end{table}

\clearpage
\subsection{Setting 2}

For each
simulated dataset (the training set) that used for variable selection, we
randomly generated an additional testing set independently from the training set
for evaluating of the prediction performance.  The size of the testing set was
set to be equal with the training set.  We used the susceptible probability
$p_j$ defined in~\eqref{eqn:logis} of the main paper to predict the cure status and evaluated the
prediction on the cure status by the regular AUC for binary outcomes.  For
survival outcomes, we computed Harrell's Concordance index among susceptible
subjects in the testing set based on the estimated risk scores
$\bx^{\top}\hat{\bb}$ from the fitted model.
The mean and standard deviation of the out-of-sample AUC for the prediction on
cure status and out-of-sample C-index are summarized in
Table~\ref{tab:simu:pred}.

\begin{table}[tbp]
  \centering
  \caption{Simulation results of Setting 2: comparison on prediction performance on cure status and survival
    outcomes, respectively, through mean of the area under curve (AUC) and the
    weighted C-index (in percentage) computed on the testing sets (with the standard
    deviation given in parenthesis).}\label{tab:simu:pred}
  \small \setlength{\tabcolsep}{3pt}
  \begin{tabular}{cccccccccccc}
    \toprule
             & \multicolumn{5}{c}{AUC}
             &        & \multicolumn{5}{c}{C-index}                                                    \\
    \cmidrule(lr){2-6}\cmidrule(lr){8-12}
    Scenario & O.Cure & I.Cure & Cure1  & Cure2  & Cure3  &  & O.Cure & I.Cure & Cure1 & Cure2 & Cure3 \\
    \midrule
    1        & 58.5   & 60.6   & 59.2   & 58.1   & 53.3   &  & 84.7   & 84.2   & 83.5  & 83.4  & 83.5  \\
             & (10.7) & (9.9)  & (10.3) & (9.8)  & (7.4)  &  & (3.5)  & (4.7)  & (6.0) & (6.7) & (5.3) \\
    [1ex]
    2        & 75.9   & 74.1   & 74.8   & 74.8   & 62.4   &  & 83.5   & 82.5   & 82.3  & 82.0  & 81.1  \\
             & (10.8) & (10.5) & (10.4) & (10.0) & (13.7) &  & (2.7)  & (4.8)  & (5.0) & (5.5) & (4.8) \\
    [1ex]
    3        & 61.3   & 64.8   & 61.5   & 59.9   & 57.1   &  & 84.8   & 84.6   & 83.3  & 82.5  & 84.6  \\
             & (12.2) & (10.8) & (11.0) & (10.4) & (11.0) &  & (1.7)  & (2.7)  & (5.4) & (7.1) & (1.8) \\
    [1ex]
    4        & 78.5   & 78.7   & 77.4   & 75.8   & 72.9   &  & 83.5   & 83.4   & 82.2  & 81.6  & 82.6  \\
             & (9.6)  & (7.7)  & (8.4)  & (9.6)  & (13.5) &  & (1.9)  & (1.9)  & (3.9) & (4.8) & (2.5) \\
    [1ex]
    5        & 60.5   & 62.3   & 60.4   & 59.1   & 51.6   &  & 84.8   & 84.0   & 83.6  & 83.1  & 83.8  \\
             & (12.0) & (9.8)  & (10.7) & (10.1) & (5.6)  &  & (2.4)  & (4.3)  & (5.0) & (5.8) & (3.4) \\
   [1ex]
    6        & 78.4   & 76.0   & 76.9   & 75.8   & 57.0   &  & 83.6   & 82.1   & 82.4  & 81.6  & 80.5  \\
             & (9.5)  & (8.6)  & (9.0)  & (9.5)  & (12.0) &  & (2.0)  & (5.0)  & (3.9) & (6.0) & (3.3) \\
    [1ex]
    7        & 67.1   & 71.0   & 66.2   & 61.9   & 59.5   &  & 84.4   & 84.3   & 82.6  & 81.7  & 84.2  \\
             & (13.9) & (9.8)  & (11.6) & (11.1) & (13.0) &  & (1.4)  & (1.8)  & (5.1) & (6.6) & (1.4) \\
    [1ex]
    8        & 81.4   & 81.3   & 80.0   & 78.7   & 72.7   &  & 83.2   & 83.0   & 82.0  & 81.3  & 81.7  \\
             & (6.8)  & (3.6)  & (5.4)  & (6.2)  & (14.5) &  & (1.7)  & (1.6)  & (2.5) & (3.9) & (2.6) \\
    \bottomrule
  \end{tabular}
\end{table}


In addition to using unit weights, we considered the adaptive weights
$\omega_k = \lvert \tilde{\beta}_k^{-1} \rvert$, and
$\nu_k = \lvert \tilde{\gamma}_k^{-1} \rvert$, where $\tilde{\bb}$ and
$\tilde{\bg}$ were the non-regularized estimates, $k\in\{1,\ldots,p\}$.  The
evaluation of variable selection performance is summarized in
Table~\ref{tab:simu:var:adaptive}.  The mean and standard deviation of the
out-of-sample AUC for the prediction on cure status and out-of-sample C-index
are summarized in Table~\ref{tab:simu:pred:adaptive}.

\begin{table}[btp]
  \centering
  \caption{Additional simulation results of Setting 2: comparison on variable selection performance for the logistic model
    part and Cox model part through mean of true positive rate
    (TPR) and false positive rate (FPR) in percentage (with the standard
    deviation given in parenthesis).}\label{tab:simu:var:adaptive}
  \small \setlength{\tabcolsep}{3pt}
  \begin{tabular}{ccccccccccccc}
    \toprule
             &         & \multicolumn{5}{c}{Logistic}
             &         & \multicolumn{5}{c}{Cox}                                                                    \\
    \cmidrule(lr){3-7}\cmidrule(lr){9-13}
    Scenario & Measure & O.Cure & I.Cure & Cure1  & Cure2  & Cure3  &  & O.Cure & I.Cure & Cure1  & Cure2  & Cure3  \\
    \midrule
    1        & TPR     & 29.1   & 32.4   & 26.0   & 24.5   & 21.4   &  & 83.0   & 84.4   & 78.7   & 76.5   & 61.7   \\
             &         & (22.3) & (26.7) & (20.7) & (18.1) & (24.6) &  & (18.2) & (20.7) & (13.3) & (21.7) & (18.9) \\
             & FPR     & 2.01   & 2.86   & 2.03   & 1.98   & 1.53   &  & 2.78   & 3.73   & 2.87   & 2.74   & 2.56   \\
             &         & (2.24) & (2.97) & (3.38) & (1.48) & (2.58) &  & (2.76) & (3.34) & (3.75) & (2.24) & (2.85) \\
    [1ex]
    2        & TPR     & 57.4   & 57.5   & 50.4   & 48.5   & 36.3   &  & 85.7   & 87.2   & 80.8   & 79.1   & 67.8   \\
             &         & (22.8) & (23.5) & (24.6) & (21.4) & (22.9) &  & (19.0) & (21.3) & (20.3) & (13.5) & (23.4) \\
             & FPR     & 2.43   & 3.55   & 2.65   & 2.74   & 1.46   &  & 2.79   & 4.08   & 3.31   & 3.36   & 2.85   \\
             &         & (2.94) & (1.91) & (2.86) & (2.41) & (2.54) &  & (3.63) & (2.65) & (3.03) & (2.22) & (3.07) \\
    [1ex]
    3        & TPR     & 36.1   & 42.5   & 28.8   & 25.0   & 29.9   &  & 88.9   & 89.9   & 80.3   & 76.4   & 83.0   \\
             &         & (21.5) & (25.1) & (26.5) & (21.7) & (22.3) &  & (20.7) & (13.8) & (23.2) & (14.7) & (12.7) \\
             & FPR     & 1.79   & 3.11   & 1.98   & 1.99   & 1.32   &  & 2.45   & 3.42   & 2.96   & 2.89   & 2.46   \\
             &         & (2.34) & (2.12) & (2.97) & (4.01) & (1.59) &  & (2.76) & (2.24) & (2.86) & (4.46) & (1.82) \\
    [1ex]
    4        & TPR     & 70.1   & 70.5   & 56.7   & 53.9   & 56.3   &  & 89.2   & 91.4   & 79.8   & 75.9   & 83.4   \\
             &         & (21.9) & (24.0) & (24.1) & (25.2) & (15.0) &  & (22.4) & (13.7) & (16.5) & (20.5) & (9.5)  \\
             & FPR     & 2.48   & 3.75   & 2.85   & 2.83   & 1.83   &  & 2.11   & 3.63   & 3.15   & 3.02   & 2.30   \\
             &         & (2.31) & (2.80) & (2.06) & (2.76) & (2.06) &  & (2.71) & (3.34) & (2.26) & (3.08) & (1.87) \\
    [1ex]
    5        & TPR     & 35.0   & 40.6   & 28.0   & 24.3   & 20.3   &  & 87.6   & 88.5   & 79.7   & 76.7   & 63.6   \\
             &         & (19.0) & (23.7) & (24.0) & (25.5) & (14.9) &  & (24.1) & (18.8) & (14.8) & (21.4) & (8.3)  \\
             & FPR     & 1.85   & 3.73   & 2.10   & 1.99   & 1.14   &  & 2.41   & 4.17   & 2.83   & 2.92   & 2.24   \\
             &         & (1.75) & (2.37) & (2.10) & (2.88) & (3.84) &  & (2.46) & (3.07) & (2.43) & (2.91) & (4.59) \\
    [1ex]
    6        & TPR     & 67.9   & 68.4   & 55.8   & 52.5   & 32.7   &  & 89.0   & 91.2   & 79.7   & 76.2   & 72.6   \\
             &         & (24.9) & (22.6) & (23.2) & (22.3) & (23.9) &  & (16.0) & (21.7) & (14.4) & (19.7) & (19.0) \\
             & FPR     & 2.41   & 4.49   & 2.81   & 2.84   & 1.05   &  & 2.29   & 4.73   & 3.10   & 3.04   & 2.63   \\
             &         & (2.60) & (2.46) & (3.20) & (1.49) & (2.79) &  & (2.55) & (2.87) & (4.08) & (2.49) & (3.32) \\
    [1ex]
    7        & TPR     & 53.8   & 60.2   & 35.0   & 28.6   & 37.8   &  & 93.1   & 95.4   & 80.1   & 73.6   & 89.1   \\
             &         & (25.1) & (22.2) & (23.9) & (24.0) & (24.4) &  & (16.9) & (17.4) & (18.7) & (10.4) & (22.1) \\
             & FPR     & 1.73   & 4.35   & 2.22   & 2.07   & 1.07   &  & 1.64   & 3.69   & 2.87   & 2.99   & 1.79   \\
             &         & (3.50) & (1.72) & (2.46) & (2.06) & (2.84) &  & (4.07) & (2.25) & (2.84) & (1.87) & (2.98) \\
    [1ex]
    8        & TPR     & 82.7   & 83.8   & 64.1   & 59.1   & 62.5   &  & 94.0   & 96.6   & 80.7   & 73.9   & 90.3   \\
             &         & (25.8) & (21.0) & (22.7) & (22.1) & (21.3) &  & (19.3) & (13.7) & (21.6) & (9.0)  & (12.6) \\
             & FPR     & 1.93   & 4.50   & 2.87   & 2.88   & 1.34   &  & 1.55   & 4.17   & 3.11   & 3.00   & 1.84   \\
             &         & (2.91) & (2.53) & (2.43) & (3.77) & (1.62) &  & (3.28) & (2.11) & (2.76) & (4.03) & (1.96) \\
    \bottomrule
  \end{tabular}
\end{table}

\begin{table}[tbp]
  \centering
  \caption{Additional simulation results of Setting 2: comparison on prediction performance on cure status and survival
    outcomes, respectively, through mean of the area under curve (AUC) and the
    weighted C-index (in percentage) computed on the testing sets (with the standard
    deviation given in parenthesis).}\label{tab:simu:pred:adaptive}
  \small \setlength{\tabcolsep}{3pt}
  \begin{tabular}{cccccccccccc}
    \toprule
             & \multicolumn{5}{c}{AUC}
             &        & \multicolumn{5}{c}{C-index}                                                  \\
    \cmidrule(lr){2-6}\cmidrule(lr){8-12}
    Scenario & O.Cure & I.Cure & Cure1 & Cure2 & Cure3  &  & O.Cure & I.Cure & Cure1 & Cure2 & Cure3 \\
    \midrule
    1        & 64.8   & 65.0   & 63.0  & 62.0  & 62.0   &  & 83.8   & 83.5   & 83.0  & 82.5  & 79.7  \\
             & (8.9)  & (8.1)  & (9.0) & (9.1) & (8.6)  &  & (3.3)  & (4.3)  & (4.6) & (5.3) & (7.0) \\
    [1ex]
    2        & 75.8   & 74.8   & 73.3  & 72.3  & 69.8   &  & 82.8   & 82.4   & 81.7  & 81.2  & 79.4  \\
             & (7.4)  & (7.2)  & (8.7) & (9.2) & (9.4)  &  & (2.8)  & (3.3)  & (4.0) & (4.8) & (5.3) \\
    [1ex]
    3        & 68.2   & 68.8   & 64.2  & 62.3  & 66.6   &  & 84.3   & 84.1   & 82.9  & 82.0  & 83.6  \\
             & (9.4)  & (7.5)  & (9.6) & (9.5) & (9.5)  &  & (2.0)  & (2.1)  & (3.5) & (5.1) & (3.0) \\
    [1ex]
    4        & 79.4   & 78.9   & 75.5  & 74.1  & 76.7   &  & 83.1   & 82.9   & 81.1  & 80.2  & 82.2  \\
             & (4.6)  & (4.7)  & (7.2) & (8.1) & (6.6)  &  & (2.0)  & (2.3)  & (4.2) & (5.4) & (2.5) \\
    [1ex]
    5        & 67.7   & 67.2   & 63.5  & 61.7  & 62.4   &  & 84.4   & 84.0   & 83.1  & 82.3  & 80.6  \\
             & (9.1)  & (7.2)  & (9.7) & (9.5) & (9.0)  &  & (2.1)  & (2.5)  & (3.1) & (4.7) & (5.4) \\
   [1ex]
    6        & 78.8   & 77.1   & 75.0  & 73.7  & 69.4   &  & 83.2   & 82.5   & 81.2  & 80.6  & 79.7  \\
             & (5.2)  & (5.1)  & (7.4) & (7.9) & (9.9)  &  & (2.0)  & (3.0)  & (4.2) & (4.9) & (4.1) \\
    [1ex]
    7        & 75.2   & 73.8   & 67.0  & 63.9  & 71.2   &  & 84.4   & 84.3   & 82.5  & 81.0  & 84.0  \\
             & (6.9)  & (5.4)  & (9.6) & (9.5) & (8.5)  &  & (1.5)  & (1.5)  & (3.2) & (5.3) & (1.7) \\
    [1ex]
    8        & 82.2   & 81.3   & 77.4  & 76.0  & 79.1   &  & 83.1   & 82.8   & 80.5  & 79.4  & 82.4  \\
             & (1.5)  & (2.2)  & (6.6) & (6.9) & (4.6)  &  & (1.5)  & (1.6)  & (3.8) & (4.8) & (1.7) \\
    \bottomrule
  \end{tabular}
\end{table}

\clearpage
\section{Additional Results for the Suicide Risk Study}


We refitted the selected model without regularization and performed bootstrap to
obtain 95\% confidence intervals of the coefficient estimates. According to \citet{zhao2017defense}, such a naive two-step procedure could
still yield asymptotically valid inference under certain
conditions. Specifically, the SE estimates were obtained from 1,000 bootstrap
samples and the confidence interval were estimated based on asymptotic normality of the coefficient estimates. The results
are shown in Table~\ref{tab:relax:joint} of Supplementary Materials. Most of the
predictors were significant at $\alpha=0.05$ significance level, except ICD-9 298, 300, and V62 in the
incidence part.

\begin{table}[htp]
  \caption{Exponentiated coefficient estimates (hazard ratios or odd ratios) of
    the refitted I.Cure model with 95\% confidence intervals.}\label{tab:relax:joint}
\centering
\begin{tabular}{crrr}
  \toprule
  ICD-9 & HR/OR & Lower & Upper \\
  \midrule
  [0.5ex]
  \multicolumn{4}{c}{Survival (Latency) Part}\\
  [0.5ex]
  304 & 1.46 & 1.11 & 1.93 \\
  V62 & 1.40 & 1.09 & 1.80 \\
  [1ex]
  \multicolumn{4}{c}{Incidence Part}\\
  [0.5ex]
  296 & 1.95 & 1.57 & 2.44 \\
  298 & 1.19 & 0.92 & 1.56 \\
  300 & 1.22 & 0.99 & 1.50 \\
  301 & 1.64 & 1.21 & 2.21 \\
  304 & 1.55 & 1.13 & 2.14 \\
  312 & 1.45 & 1.00 & 2.10 \\
  313 & 1.92 & 1.22 & 3.01 \\
  319 & 3.85 & 1.15 & 12.96 \\
  564 & 1.87 & 1.21 & 2.90 \\
  V62 & 1.22 & 0.95 & 1.57 \\
   \bottomrule
\end{tabular}
\end{table}


\clearpage
\bibliographystyle{asa}
\bibliography{cure-mci.bib}

\end{document}